\documentclass[lettersize,journal]{IEEEtran}
\usepackage{amsmath,amsfonts}
\usepackage[ruled]{algorithm2e}
\usepackage{array}
\usepackage[caption=false,font=normalsize,labelfont=sf,textfont=sf]{subfig}
\usepackage{textcomp}
\usepackage{stfloats}
\usepackage{url}
\usepackage{amssymb}
\usepackage{verbatim}
\usepackage{color}
\usepackage{soul}
\usepackage{graphicx}
\usepackage{cuted}
\usepackage{bm} 
\usepackage{mathrsfs}
\usepackage{multirow}
\usepackage{color}
\usepackage{cite}

\usepackage{amsthm}

\newtheorem{observation}{Observation}
\newtheorem{corollary}{Corollary}
\newtheorem{remark}{Remark}
\newtheorem{proposition}{Proposition}
\newtheorem{property}{Property}

\hyphenation{op-tical net-works semi-conduc-tor IEEE-Xplore}

\begin{document}
\title{Ambiguity Function Analysis of AFDM Signals for Integrated Sensing and Communications}
\author{Haoran Yin, Yanqun Tang, Yuanhan Ni, \textit{Member, IEEE}, Zulin Wang, \textit{Member, IEEE},\\  Gaojie Chen, \textit{Senior Member, IEEE}, Jun Xiong, \textit{Member, IEEE}, Kai Yang, \textit{Member, IEEE}, \\Marios Kountouris, \textit{Fellow, IEEE}, Yong Liang Guan, \textit{Senior Member, IEEE}, and Yong Zeng, \textit{Fellow, IEEE}
	\thanks{
		This work was supported in part by the Shenzhen Science and Technology Major Project under Grant KJZD20240903102000001 and in part by the Science and Technology Planning Project of Key Laboratory of Advanced IntelliSense Technology, Guangdong Science and Technology Department under Grant 2023B1212060024. (\textit{Corresponding author: Yanqun Tang.})
		
		 Haoran Yin and Yanqun Tang are with the School of Electronics and Communication Engineering, Sun Yat-sen University, China (e-mail: yinhr6@mail2.sysu.edu.cn,  tangyq8@mail.sysu.edu.cn);
		 
		 Yuanhan Ni and Zulin Wang are with the School of
		 Electronic and Information Engineering, Beihang University, China (e-mail: yuanhanni@buaa.edu.cn,  wzulin@buaa.edu.cn);
		 
		 Gaojie Chen is with the School of Flexible Electronics (SoFE), Sun Yat-sen University, China (e-mail: gaojie.chen@ieee.org);
		 
		 Jun Xiong is with the College of Electronic Science and Technology,
		  National University of Defense Technology, China (e-mail:xj8765@nudt.edu.cn);
		  
		 Kai Yang is with the School of Information and Electronics, Beijing Institute of Technology, China (e-mail: yangkai@ieee.org);
		 
		 Marios Kountouris is with the Communication Systems Department, EURECOM, France, and with the Andalusian Research Institute in Data Science and Computational Intelligence (DaSCI), Department of Computer Science and Artificial Intelligence, University of Granada, Spain (e-mail: marios.kountouris@eurecom.fr);
		 
		 Yong Liang Guan is with the School of Electrical and Electronic Engineering, Nanyang Technological University, Singapore (e-mail: eylguan@ntu.edu.sg);
		 
		 Yong Zeng is with the National
		 Mobile Communications Research Laboratory and the Frontiers Science Center for Mobile Information Communication and Security, Southeast University, and also with the Purple Mountain Laboratories, China (e-mail: yong$\_$zeng@seu.edu.cn).			 
		}  
}
\markboth{}%
{Shell \MakeLowercase{\textit{et al.}}: A Sample Article Using IEEEtran.cls for IEEE Journals}
\maketitle

\begin{abstract}
	\textbf{Affine frequency division multiplexing (AFDM) is a promising chirp-based waveform with high flexibility and resilience, making it well-suited for next-generation wireless networks, particularly in high-mobility scenarios. In this paper, we investigate the ambiguity functions (AFs) of AFDM signals, which fundamentally characterize their range and velocity estimation capabilities in both monostatic and bistatic settings. Specifically, we first derive the auto-ambiguity function (AAF) of an AFDM chirp subcarrier, revealing its “spike-like” local property and “periodic-like” global property along the rotated delay and Doppler dimensions. This structure naturally forms a parallelogram for each localized pulse of the AAF of the AFDM chirp subcarrier, enabling unambiguous target sensing. Then, we study the cross-ambiguity function (CAF) between two different AFDM chirp subcarriers, which exhibits the same local and global properties as the AAF but with an additional shift along the Doppler dimension. We then extend our analysis to the AF of various typical AFDM frames, considering both deterministic pilot and random data symbols. In particular, we demonstrate that inserting guard symbols in AFDM facilitates interference-free sensing. Simulation results validate our theoretical findings, highlighting AFDM's strong potential for ISAC applications.} 
\end{abstract}

\begin{IEEEkeywords}
	Affine frequency division multiplexing (AFDM),  integrated sensing and communications (ISAC), ambiguity function (AF), doubly selective channel (DSC).
\end{IEEEkeywords}

\section{Introduction}
Sixth-generation (6G) wireless networks are set to transform emerging applications, including vehicle-to-everything (V2X), unmanned aerial vehicles (UAV), and satellite networks \cite{bb25.2.4.1}. However, the conventional orthogonal frequency division multiplexing (OFDM) waveform is highly sensitive to Doppler shifts in these high-mobility scenarios \cite{bb22.10.24.2}, highlighting the urgent need for novel communication waveforms. Meanwhile, integrated sensing and communications (ISAC) has been recognized by the International Telecommunication Union (ITU) as one of the six pivotal usage scenarios in 6G, making the sensing capabilities of communication waveforms a crucial factor in evaluating their suitability and effectiveness \cite{bb25.01.08.102,bb25.01.08.1}. Consequently, the demand for advanced communication waveforms to enable effective ISAC in doubly selective channels (DSCs) has become increasingly urgent.

In this context, several new waveforms have emerged in recent years. The delay-Doppler (DD) waveforms, including orthogonal time frequency space (OTFS) \cite{bb2,bb24.08.21.2,bb25.01.21.1}, orthogonal delay Doppler division multiplexing (ODDM) \cite{bb24.03.15.2}, and delay-Doppler alignment modulation (DDAM) \cite{ bb25.02.11.2}, attract widespread interest.  OTFS leverages the discrete Zak transform to multiplex information symbols in the DD domain, where the underlying “pulsone" subcarrier structure in OTFS effectively captures the quasi-static characteristics of the DSC \cite{bb25.01.21.2, bb24.03.15.311,  bb24.03.15.3}. Moreover, DDAM is particularly appealing for millimeter wave (mmWave) extremely large-scale MIMO (XL-MIMO) systems that have large spatial resolution and multi-path sparsity \cite{bb25.02.11.1}. Another promising alternative is orthogonal chirp division multiplexing (OCDM), a chirp-based waveform with a fixed chirp slope determined by the discrete Fresnel transform (DFnT) \cite{bb24.08.27.1}. Although OCDM offers certain improvements over OFDM, it fails to fully exploit the inherent diversity of the DSC. Recently, a novel chirp-based multicarrier waveform,  named affine frequency division multiplexing (AFDM), has been proposed for communications in DSCs, garnering significant attention \cite{bb24.08.27.2}. The core principle of AFDM involves modulating information symbols onto a set of orthogonal chirps with a tunable chirp slope, utilizing the discrete affine Fourier transform (DAFT), which is a generalization of both the DFnT and the discrete Fourier transform (DFT) \cite{bb8}. By adjusting its two fundamental chirp parameters to match the DD profile of the DSC, AFDM can effectively separate all propagation paths with distinct delay or Doppler shifts in the underlying DAFT domain. This not only ensures precise channel estimation and optimal diversity gain in DSCs but also provides it with high flexibility. These compelling advantages, combined with its inherent chirp-based nature, position AFDM as a strong candidate for 6G waveforms to enable ISAC in high-mobility scenarios\cite{bb25.01.08.2, bb24.9.08.100}.

Extensive research has been conducted to unlock the full potential of AFDM. An embedded pilot-aided (EPA) channel estimation scheme was introduced in \cite{bb24.08.27.2}, demonstrating that AFDM requires lower channel estimation overhead compared to OTFS. In \cite{23.10.18.1}, an EPA diagonal reconstruction (EPA-DR) channel estimation scheme was proposed, which reconstructs AFDM's quasi-diagonal effective channel matrix directly from the received pilot symbols by exploiting its diagonal reconstructability. Besides, the authors of \cite{bb25.01.25.1} and \cite{bb25.01.25.2} proposed eliminating guard symbols and performing joint channel estimation and signal detection to further minimize the channel estimation overhead. Furthermore, an instructive pulse shaping guideline was introduced in \cite{bb24.9.15.1}, which enhances the pilot-data separability of the AFDM receiver and enables low-overhead, high-accuracy channel estimation. To further reduce the modulation complexity, the authors of \cite{bb24.9.08.3} and \cite{bb24.9.08.2} proposed optimizing chirp parameters without compromising the bit error ratio (BER) performance. In \cite{bb23.3.6.1}, a scalable space-time coding scheme, called cyclic delay-Doppler shift (CDDS), was developed to achieve optimal transmit diversity gain in multiple-input multiple-output AFDM (MIMO-AFDM) systems. Moreover, research on AFDM in areas such as signal detection \cite{bb25.01.09.4, bb25.01.25.3}, multiple access \cite{bb25.01.08.2,bb24.9.08.1,bb24.9.08.10}, and index modulation \cite{bb24.03.15.8,bb24.9.23.2, bb24.03.15.10, bb24.03.15.1230} highlights its strong practicality and versatility.

While the aforementioned studies focused on improving AFDM's communication performance, many others have explored its applicability in ISAC \cite{b9, bb24.9.08.103}. For instance, the authors of \cite{bb2022.11.11.2} proposed two parameter estimation techniques and introduced two novel metrics for processing echo signals in both the time domain and DAFT domains, effectively decoupling delay and Doppler shifts along the fast-time axis. Meanwhile, a compressed sensing algorithm with recovery guarantees was proposed in \cite{bb25.1.21.10}, demonstrating the superiority of AFDM in terms of minimizing the sampling rate requirements for sub-Nyquist radar. Moreover, the authors of \cite{bb2025.6.26.1} proposed a generalized ISAC solution for AFDM systems, enabling joint channel and data estimation. In contrast to the discrete-time signal assumption adopted in the aforementioned studies, the authors of \cite{bb24.03.15.5} introduced an AFDM-based ISAC system with simple self-interference cancellation in a continuous-time setting, showing that using just a single DAFT-domain pilot symbol achieves nearly the same resolution performance as an entire AFDM frame. Furthermore, the authors of \cite{bb24.9.08.104} proposed a novel angle-delay-Doppler estimation scheme for AFDM-ISAC systems operating in mixed near-field and far-field scenarios, enabling off-grid parameter estimation.

Recently, the analysis and optimization of communication waveforms' ambiguity functions (AFs) under random data signaling have emerged as a key research focus in communication-centric ISAC waveform design. In particular, the mainlobe widths of the AF along the delay and Doppler dimensions determine the range and velocity resolutions, respectively, while the sidelobe level indicates the extent of interference from adjacent targets, factors that are crucial for high-resolution multi-target sensing in 6G. In \cite{bb25.01.09.1}, the statistical properties of the AF for single-carrier modulation (SCM) signals were examined, leading to the development of a randomness-aware filter technique that minimizes the average sidelobe level. Moreover, \cite{bb25.01.09.2} demonstrated that for quadrature amplitude modulation (QAM) and phase shift keying (PSK) constellations, OFDM is the globally optimal modulation scheme for cyclic prefix (CP)-based signals with Nyquist sampling, achieving the lowest average ranging sidelobe level under random signaling. Building on this, \cite{bb25.01.09.3} provided a systematic analysis of the ranging performance of continuous-time ISAC signals with random data and introduced a novel Nyquist filter design to further improve sensing capabilities. Besides, the studies in \cite{bb25.01.21.2, bb24.03.15.3,bb24.03.15.33} revealed that a single pilot subcarrier in OTFS can generate an AF with rapidly decaying sidelobes. In \cite{bb24.9.08.4}, the zero-Doppler cut of the AF for AFDM was preliminarily analyzed, leading to a parameter selection criterion for near-optimal AF design. However, the analysis assumed a discrete-time AFDM signal with all chirp subcarriers carrying pilot symbols and on-grid delay shifts for all targets, assumptions that limit its ability to fully characterize the sensing performance of continuous-time AFDM signals in practical scenarios. Against this backdrop, a critical yet unresolved fundamental question naturally arises: \emph{What about the AFs of general continuous-time AFDM signals?} 

Therefore, this paper aims to answer this question by providing a comprehensive analysis of the two-dimensional AFs of continuous-time AFDM signals - a fundamental measure of their sensing capabilities for range and velocity estimation using matched filtering (MF). In particular, we examine the auto-ambiguity function (AAF) and cross-ambiguity function (CAF) of various AFDM signals, corresponding to monostatic and bistatic sensing configurations, respectively. The main contributions of this work can be summarized as follows:
\begin{itemize} 
	\item[$\bullet$]
	We begin by deriving the AAF of AFDM chirp subcarriers in detail, revealing their “spike-like” local property and “periodic-like” global property along the rotated delay and Doppler dimensions, along with their underlying rationale. Based on that, we define an unambiguity parallelogram for the AAF of AFDM chirp subcarriers and demonstrate how to achieve unambiguous sensing in a given sensing channel by flexibly adjusting its shape through proper AFDM parameter selection. 
\end{itemize}
\begin{itemize} 
	\item[$\bullet$]
	We analyze the CAF between different AFDM chirp subcarriers, revealing that they share the same local and global properties as the AAF of AFDM chirp subcarriers with an additional shift along the Doppler dimension. Specifically, through a rigorous derivation of the CAF’s closed-form expression, we demonstrate that this extra Doppler shift precisely corresponds to the frequency difference between the two chirp subcarriers.
\end{itemize}
\begin{itemize} 
	\item[$\bullet$]
	We further investigate the AFs of various typical AFDM frames with different pilot-data structures. Specifically, we statistically analyze the AAF of AFDM frames containing only random data, demonstrating their thumbtack-like property through simulations. Additionally, we show that inserting guard symbols between the pilot and data symbols in AFDM enables interference-free sensing. This can be achieved by appropriately tuning the AFDM parameters and adjusting the number of guard symbols based on the sensing channel characteristics.
\end{itemize}

The remainder of this paper is organized as follows. Sec. \ref{Sec2} introduces the AFDM-ISAC system model, which lays the foundations for the analysis of the AAF of the AFDM chirp subcarrier in Sec. \ref{sec3}. Sec. \ref{sec4} extends the analysis to the CAF between two different AFDM chirp subcarriers, followed by the discussion of the AFs of various typical AFDM frames in Sec. \ref{sec5} and simulation results in Sec. \ref{sec6}. Finally, Sec. \ref{sec7} concludes this paper.

\textit{Notations:} Symbol $\mathbf{a}[i]$ represents the $i$-th element of vector $\mathbf{a}$;  $\mathbb{Z}$ denotes the set of integers; $\varnothing$ represents the empty set; $a \sim \mathcal{C} \mathcal{N}\left(\mathbf{0}, N_{0}\right)$ indicates that random variable $a$ follows the complex Gaussian distribution with zero mean and covariance $N_{0}$; $\delta\left(\cdot\right)$ denotes the Dirac delta function; $(\cdot)^{*}$ and $(\cdot)^{T}$ denote the conjugate and transpose, respectively; $\lvert  \cdot  \rvert$ represents the absolute value of a complex scalar; $\mathbb{E}(\cdot)$ denotes the expectation; $\text{Real}(\cdot)$ extracts the real part of a variable; and $\min\{\cdot\}$ selects the minimum value from the input; $j$ represents $\sqrt{-1}$.

\section{AFDM-ISAC System Model}
\label{Sec2}
In this section, we introduce the AFDM-ISAC system in both monostatic and bistatic settings, which is illustrated in Fig. \ref{fig2-1}. 
\begin{figure}[tbp]
	\centering
	\includegraphics[width=0.460\textwidth,height=0.27\textwidth]{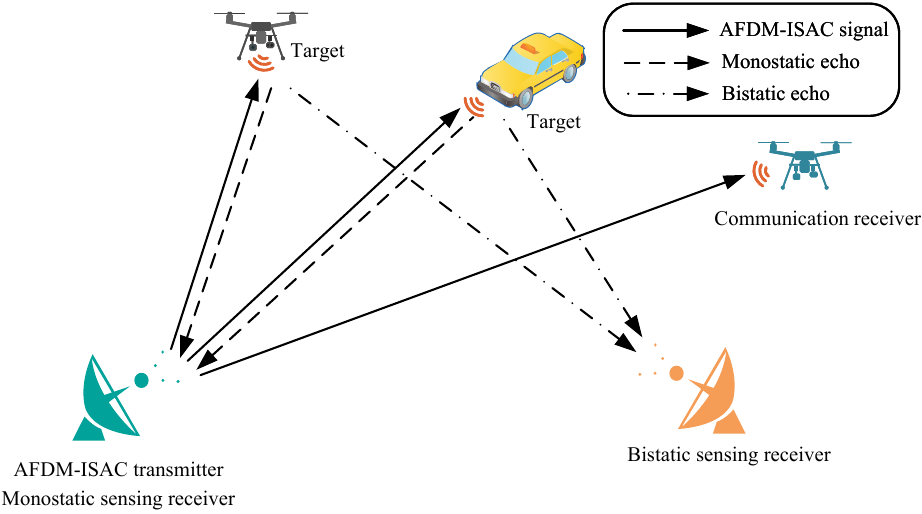}
	\caption{AFDM-ISAC system model for both monostatic and bistatic sensing scenarios.}
	\label{fig2-1}
\end{figure}
\begin{figure}[tbp]
	\centering
	\includegraphics[width=0.479\textwidth,height=0.5\textwidth]{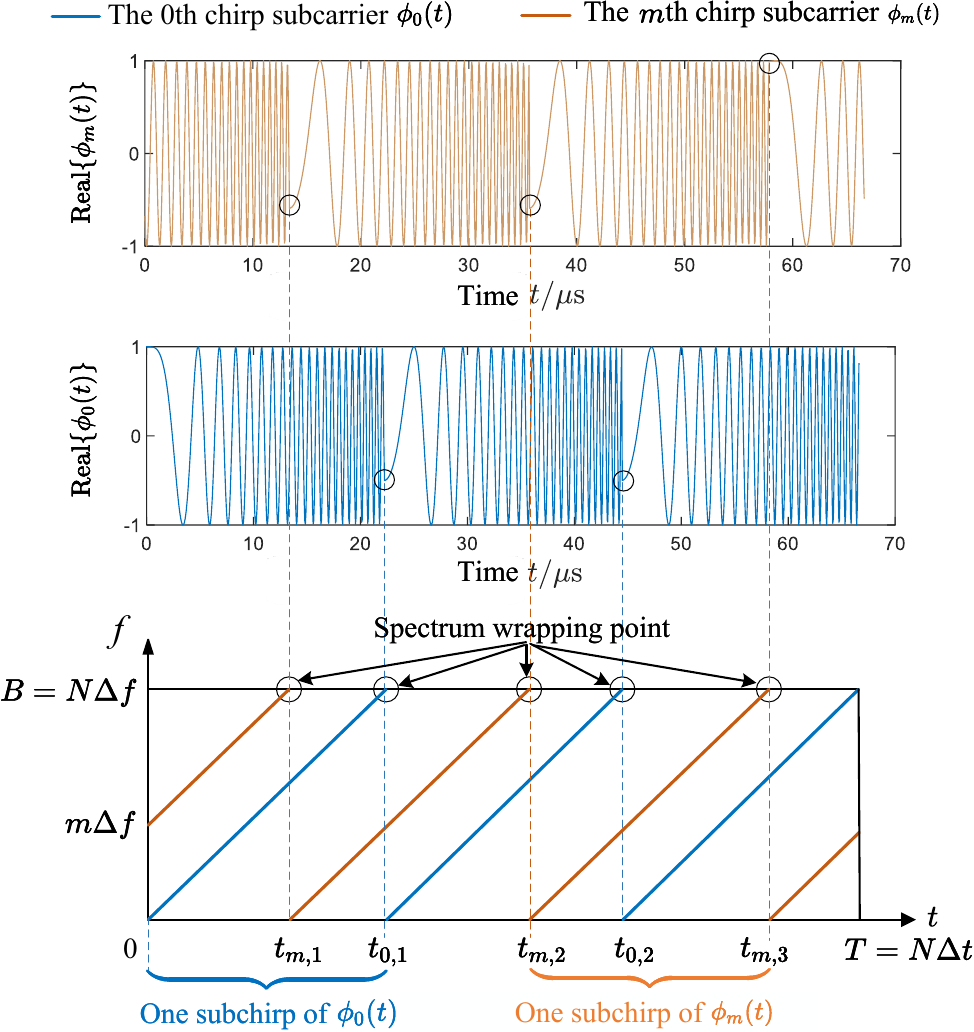}
	\caption{An example of the time and time-frequency representations of the $0$th and $m$th chirp subcarriers, with parameters: $T \approx 66.67 \mu$s,  $N=128$,  $c_{1} =\frac{3}{2N}$, and  $c_{2} =\sqrt{2}$.}
	\label{fig.2-2}
\end{figure} 
\subsection{AFDM-ISAC Transmitter}
Let $N$ denote the number of chirp subcarriers and $\mathbf{x}= \left[ x[0], x[1], \dots, x[N-1]\right]^{T}$ represent the transmitted DAFT-domain vector with its entries consisting of pilot symbol, guard symbols or data symbols. Then, $N$-point inverse DAFT (IDAFT) is firstly implemented on $\mathbf{x}$ to modulate the $N$ DAFT-domain symbols onto the $N$ chirp subcarriers, respectively, which generates the discrete-time-domain AFDM-ISAC signal as
\begin{equation}
	s[n]= \frac{1}{\sqrt{N}}\sum_{m=0}^{N-1} x[m] \phi_{m}[n],  \quad n=0,1,\dots, N-1,
	\label{eq2-1}
\end{equation}
where $n$ and $m$ denote the discrete-time domain and DAFT domain indices, respectively, and $\phi_{m}[n]$ represents the $m$th subcarrier with an expression given by 
\begin{equation} 
	\phi_{m}[n]= e^{j 2 \pi\left(c_{2} m^{2}+c_{1} n^{2}+  \frac{mn }{N} \right)}, \quad m=0,1,\dots, N-1.
	\label{eq2-2}
\end{equation}
$c_{1}$ and $c_{2}$ are two key parameters that determine the chirp slope and the initial phase of all AFDM chirp subcarriers, respectively. As we shall see later, these two parameters, especially $c_{1}$, significantly influence the AFs of AFDM signals. Moreover, a chirp-periodic prefix (CPP) given by
\begin{equation} 
	s[n] = s[n+N]e^{-j 2 \pi c_{1}\left(N^{2}+2Nn \right)}, \ n=-L_{\text{CPP}}, \dots, -1
	\label{eq24.01.09.1}
\end{equation}
is appended to $s[n]$ to cope with the multipath effect of the communication channel, where $L_{\text{CPP}}$ is the length of the CPP and should be set larger than or equal to the maximum delay spread of the communication channel. Note that when $2N|c_{1}| \triangleq C$ is an integer and $N$ is even, as assumed throughout this paper, equation (\ref{eq24.01.09.1}) simplifies to $s[n] = s[n+N],\ n=-L_{\text{CPP}}, \dots, -1$, i.e., the CPP reduces to a conventional CP as in OFDM. The corresponding continuous-time representation of $\phi_{m}[n]$ can be obtained in a piecewise manner as \cite{bb24.03.15.5}
\begin{equation} 
\phi_{m}(t) = e^{j 2 \pi\left(c_{2}m^{2}+\tilde{c}_{1}t^{2}+\frac{m}{T}t-\frac{q}{\Delta t}t\right)}, \ \ \
t_{m,q} \le t < t_{m,q+1},
\label{eq2-3}
\end{equation}
where $t$ and $\Delta t$ denote the continuous-time domain and the Nyquist sampling interval, respectively, $\tilde{c}_{1} \triangleq \frac{c_{1}}{\Delta t^{2}}$, $t_{m,q}$ is the $q$th ($q = 0,1,\dots,C$) \emph{spectrum wrapping point} of the $m$th chirp subcarrier with a definition given by
\begin{align}
	& \ t_{m,q} \triangleq \left\{\begin{array}{ll}
		0,& q=0 \\
		\frac{T}{C}q-\frac{m}{C}\Delta t,
		& q=1,2, \dots,C \\ 
		T, &q = C+1,
	\end{array}\right. 
\label{eq24.01.08.1}
\end{align}
where $T=N\Delta t$ is the duration of $\phi_{m}(t)$. This paper adopts the widely used rectangular pulse shaping, wherein all chirp subcarriers have finite support over the interval $[0,T]$ \cite{bb24.9.15.1}. One can easily verify that the equation of $\phi_{m}(t)|_{t=n\Delta t}=\phi_{m}[n], \ n=0,1, \dots, N-1$ holds for all $m = 0,1, \dots, N-1$, i.e., $\phi_{m}[n]$ can be acquired by sampling $\phi_{m}(t)$ with a sampling interval of $\Delta t$. For clarity, we illustrate the time-domain and time-frequency-domain representations of two AFDM chirp subcarriers in Fig. \ref{fig.2-2}. Building on this, the CPP-free continuous-time transmitted AFDM-ISAC signal can be expressed as\footnote{ The transmitted AFDM-ISAC signal $s(t)$ in (\ref{eq24.01.20.2}) serves as an “ideal" continuous-time counterpart of $s[n]$ in (\ref{eq2-1}), enabling insightful close-form derivations of the AFs for AFDM signals. In practice, it can be well approximated by interpolating $s[n]$ and passing the result through a digital-to-analog filter \cite{bb24.9.08.3}.}
\begin{align}
	s(t)= \sum_{m=0}^{N-1} x[m] \phi_{m}(t),  \quad 0 \le t < T. \label{eq24.01.20.2}
\end{align}

\begin{observation}[Characteristics of the AFDM chirp subcarrier] \textup{The spectrum wrapping phenomenon can be observed in Fig. \ref{fig.2-2}, which ensures that the instantaneous frequencies of all chirp subcarriers are confined within the range of $\left[0, \frac{1}{\Delta t}\right]$, guaranteeing a total bandwidth of $B=\frac{1}{\Delta t}$ approximately. Moreover, all chirp subcarriers share the same analog chirp rate of $2\tilde{c}_{1}$ but have a unique initial phase of $j2\pi c_{2} m^{2}$, and a uniform subcarrier spacing of $\Delta f = \frac{B}{N}$ is established among all chirp subcarriers. Furthermore, each chirp subcarrier can be viewed as a combination of $C$ “\textbf{\emph{subchirps}}", where each subchirp occupies a duration of $T_{\text{SC}}=\frac{T}{C}$ with its instantaneous frequency sweeping from zero to $B$ (for the $m$th  chirp subcarrier, $m \neq 0$, concatenating its head and tail effectively forms an intact subchirp)}.
	\label{obs1}
\end{observation}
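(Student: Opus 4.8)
The plan is to verify each asserted property by directly computing the \emph{instantaneous frequency} of $\phi_m(t)$ from its piecewise phase in (\ref{eq2-3}) and evaluating it at the spectrum wrapping points defined in (\ref{eq24.01.08.1}). Writing $\phi_m(t) = e^{j2\pi\psi_{m,q}(t)}$ with $\psi_{m,q}(t) = c_2 m^2 + \tilde{c}_1 t^2 + \frac{m}{T}t - \frac{q}{\Delta t}t$ on the segment $t_{m,q}\le t < t_{m,q+1}$, differentiation of the phase gives the instantaneous frequency
\begin{equation}
f_{m,q}(t) = \psi'_{m,q}(t) = 2\tilde{c}_1 t + \frac{m}{T} - \frac{q}{\Delta t}.
\end{equation}
Two of the claims are then immediate: the coefficient of $t$ equals $2\tilde{c}_1$ on every segment, so all subcarriers share the same analog chirp rate $2\tilde{c}_1$; and the only $t$-independent, $m$-dependent contribution to the phase is $c_2 m^2$, yielding the per-subcarrier initial phase $j2\pi c_2 m^2$.

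The remaining claims reduce to evaluating $f_{m,q}$ at the segment endpoints. First I would record the cancellation identities that follow from $C = 2N|c_1|$, $\tilde{c}_1 = c_1/\Delta t^2$, and $T = N\Delta t$, namely $2\tilde{c}_1\frac{T}{C} = \frac{1}{\Delta t}$ and $2\tilde{c}_1\frac{\Delta t}{C} = \frac{1}{T}$; these make everything snap onto the frequency grid. Substituting $t_{m,q} = \frac{T}{C}q - \frac{m}{C}\Delta t$ for $1 \le q \le C$ yields $f_{m,q}(t_{m,q}) = 0$ and $f_{m,q}(t_{m,q+1}^-) = \frac{1}{\Delta t} = B$, so each interior segment sweeps the full band $[0,B]$ over duration $t_{m,q+1}-t_{m,q} = \frac{T}{C} = T_{\text{SC}}$. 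This confirms that the instantaneous frequency stays confined to $[0,B]$, hence a bandwidth of $B = 1/\Delta t$ up to the usual spectral spreading induced by the finite support and the frequency discontinuities at the wrapping points, which is why the bandwidth claim is stated only \emph{approximately}.

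Finally I would dispatch the two boundary segments. For the head ($q=0$), $f_{m,0}(0) = \frac{m}{T} = m\Delta f$ and $f_{m,0}(t_{m,1}^-) = B$, so the head sweeps $[m\Delta f, B]$ over duration $\frac{T}{C} - \frac{m}{C}\Delta t$; for the tail ($q=C$), $f_{m,C}(t_{m,C}) = 0$ and $f_{m,C}(T^-) = m\Delta f$, so the tail sweeps $[0, m\Delta f]$ over duration $\frac{m}{C}\Delta t$. Concatenating tail and head therefore reconstitutes one intact subchirp sweeping $[0,B]$ over the combined duration $\frac{T}{C} = T_{\text{SC}}$, which together with the $C-1$ interior segments accounts for exactly $C$ subchirps; the head's starting frequency $m\Delta f$ simultaneously pins down the uniform spacing $\Delta f = \frac{1}{T} = \frac{B}{N}$. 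The computation is entirely routine once the two identities are in hand, so the \emph{main obstacle} is not analytical but organizational: handling the sign convention on $c_1$ consistently and treating the degenerate case $m=0$ separately, where the $q=C$ segment shrinks to zero width and no head/tail concatenation is needed.
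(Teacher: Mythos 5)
Your proposal is correct and follows what is essentially the paper's own (implicit) justification: the paper states Observation~\ref{obs1} without a formal derivation, reading the properties directly off the piecewise continuous-time representation in (\ref{eq2-3}) and Fig.~\ref{fig.2-2}, and your phase-differentiation argument is exactly the rigorous formalization of that reading, using the same ingredients (the wrapping points (\ref{eq24.01.08.1}) and the identities implied by $C=2N|c_1|$, $\tilde{c}_1=c_1/\Delta t^2$, $T=N\Delta t$). All endpoint evaluations check out, including the head/tail concatenation for $m\neq 0$ and the degenerate $m=0$ case, so nothing is missing.
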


\subsection{ISAC Channel Model}
\label{sec2-2}
As shown in Fig. \ref{fig2-1}, part of the AFDM-ISAC signal interacts with the communication channel and arrives at the communication receiver, while part of it reaches the targets and is reflected to a sensing receiver. Generally, both communication and sensing channels are modeled as DSC, whose impulse responses with delay $\tau$ and Doppler $\nu$ can be defined in the following way as \cite{bb24.08.28.1}
\begin{equation}
	h(\tau, \nu)=\sum_{i=1}^{P} h_{i} \delta\left(\tau-\tau_{i}\right) \delta\left(\nu -\nu_{i}\right),
	\label{eq2-4}
\end{equation}
where $P$ denotes the number of paths (targets), $h_{i}$, $\tau_{i}$, and $\nu_{i}$ denote the channel gain, delay shift, and Doppler shift of the $i$th path, respectively. For monostatic sensing, the round-trip delay and Doppler shift of the $i$th target are given by $\tau_{i}=\frac{2d_{i}}{v_{c}}$ and $\nu_{i}=\frac{2v_{i}f_{c}}{v_{c}}$, respectively, where $d_{i}$ and $v_{i}$ denote the distance and relative radial velocity of the target, and $v_{c}$ and $f_{c}$ represent the speed of light and the carrier frequency, respectively. In the bistatic sensing, these parameters become $\tau_{i}=\frac{d_{i,t}+d_{i,r}}{v_{c}}$ and $\nu_{i}=\frac{(v_{i,t}+v_{i,r})f_{c}}{v_{c}}$, where $d_{i,t}$ and $d_{i,r}$ are the distances from the transmitter and receiver to the target, and $v_{i,t}$ and $v_{i,r}$ are the corresponding relative radial velocities. Subsequently, the received AFDM-ISAC signal, for both the communication and sensing receivers, can be formulated as
\begin{align}
	r(t)&=\int_{-\infty}^{\infty} \int_{-\infty}^{\infty} h(\tau, \nu) s(t-\tau) e^{j 2 \pi \nu t} \text{d} \tau \text{d} \nu + w(t)  \notag\\
	&=\sum_{i=1}^{P} h_{i}  s(t-\tau_{i})e^{j  2 \pi\nu_{i}t}+w(t),  
	\label{eq24.01.09.2}
\end{align}
where $w(t)\sim \mathcal{C} \mathcal{N}\left(0, N_{0}\right)$ is the time-domain additive white Gaussian noise (AWGN) process.

\subsection{AFDM-ISAC Receivers}
\label{sec2-3}

\emph{\textbf{1) Communication receiver:}} After performing Nyquist sampling on $r(t)$ and discarding the CPP component, the received discrete-time communication signal $r_{\text{com}}[n]$ can be obtained, which is then demodulated via DAFT as
\begin{equation}
	y[m]= \frac{1}{\sqrt{N}}\sum_{n=0}^{N-1} r_{\text{com}}[n] \phi_{m}^{*}[n],  \quad m=0,1, \dots, N-1,
	\label{eq24.01.09.3}
\end{equation}
enabling DAFT-domain channel estimation \cite{23.10.18.1,bb25.01.25.1,bb25.01.25.2} and signal detection \cite{bb25.01.09.4,bb25.01.25.3}.

\emph{\textbf{2) Monostatic sensing receiver (MSR):}} At the MSR, the transmitted AFDM-ISAC signal $s(t)$ is fully known despite its randomness introduced by the random data symbols given that the MSR is collocated with the AFDM-ISAC transmitter. Therefore, one common operation is performing MF on the echo signal, i.e., the received AFDM-ISAC signal $r_{\text{MSR}}(t)$ with $s(t)$, which can be formulated as 
\begin{equation}
	R(\tau, \nu)= \int_{-\infty}^{\infty} r_{\text{MSR}}(t)s^{*}(t-\tau)e^{-j  2 \pi\nu t} \text{d}t.
	\label{eq24.01.09.4}
\end{equation}
Subsitituting  (\ref{eq24.01.09.2}) into (\ref{eq24.01.09.4}), we have
\begin{align}
	&R(\tau, \nu)  \notag\\
	&=\int_{-\infty}^{\infty} \left(\sum_{i=1}^{P} h_{i}  s(t-\tau_{i})e^{j2\pi\nu_{i}t} \right)s^{*}(t-\tau)e^{-j  2 \pi\nu t} \text{d}t + \tilde{w}(t)  \notag \\
	&=\sum_{i=1}^{P} \tilde{h}_{i}\int_{-\infty}^{\infty}s(t)s^{*}(t-(\tau-\tau_{i}))e^{-j2\pi(\nu-\nu_{i})t}\text{d}t + \tilde{w}(t) \notag \\
	&=\sum_{i=1}^{P} \tilde{h}_{i}A_{s,s}(\tau-\tau_{i}, \nu-\nu_{i}) + \tilde{w}(t),
	\label{eq24.01.09.5}
\end{align}
where $\tilde{w} (t) = \int_{-\infty}^{\infty}w(t)s^{*}(t-\tau)e^{-j  2 \pi\nu t} \text{d}t$ is the noise after MF, $\tilde{h}_{i}=h_{i}e^{-j2\pi(\nu-\nu_{i})\tau_{i}}$, $A_{s,s}(\tau, \nu)$ is the aperiodic AF with a definition given by
\begin{equation}
	A_{a,b}(\tau, \nu)\triangleq \int_{-\infty}^{\infty}a(t)b^{*}(t-\tau)e^{-j2\pi\nu t}\text{d}t.
	\label{eq24.01.09.6}
\end{equation}
In particular, AF is known as AAF and CAF when $b(t)=a(t)$ and $b(t)\neq a(t)$, respectively. Moreover, as shown in (\ref{eq24.01.09.5}), the output of MF of $r_{\text{MSR}}(t)$ and $s(t)$ is the superposition of $P$ scaled and shifted AAF of $s(t)$, where the shifts along the delay domain and the Doppler domain are exactly the delay shifts and the Doppler shifts associated with the $P$ sensing targets, respectively\footnote{Although the AF analysis in this paper focuses on a single-antenna setting, it can be readily extended to various multi-antenna configurations by incorporating the spatial domain into the AF through the use of appropriate spatial steering vectors.}. 

\emph{\textbf{3) Bistatic sensing receiver (BSR):}} The BSR has a different location from the AFDM-ISAC transmitter, which means only the pilot component of $s(t)$, i.e., one pilot chirp subcarrier, is known\footnote{Here, we assume that there is no fronthaul connection between the ISAC transmitter and the BSR; otherwise, the BSR could utilize the full signal $s(t)$ to perform MF as in (\ref{eq24.01.09.4}), rather than relying solely on its pilot component.}. Therefore, the BSR usually performs MF on the echo signal $r_{\text{BSR}}(t)$ with the pilot component of $s(t)$, which is deeply determined by the CAF between $s(t)$ and one chirp subcarrier.

The above discussion highlights the close connection between the MF operation in sensing and the AF. Building on these insights, we proceed with the analysis of the AFs of AFDM-ISAC signals in the following. For clarity, we focus on the AFs of CPP-free AFDM-ISAC signals, which comprise $\phi_{m}(t)$ in (\ref{eq2-3}). The length of the CPP is determined by the communication channel and is typically independent of the sensing channel. However, as we shall demonstrate later, all conclusions presented in this paper remain valid for CPP-appended AFDM-ISAC signals.

\section{AAF of AFDM Chirp Subcarriers}
\label{sec3}
As demonstrated in the IDAFT provided in (\ref{eq2-1}), each DAFT-domain symbol in $\mathbf{x}$ is associated with a chirp subcarrier, and every AFDM signal is essentially the combination of $N$ chirp subcarriers. Therefore, we are motivated to investigate the aperiodic AAF of the AFDM chirp subcarriers, which serves as the foundation for analyzing the AFs of AFDM frames in Sec. \ref{sec5}.
\begin{figure*}[htbp]
	\centering
	\includegraphics[width=0.98\textwidth,height=0.47\textwidth]{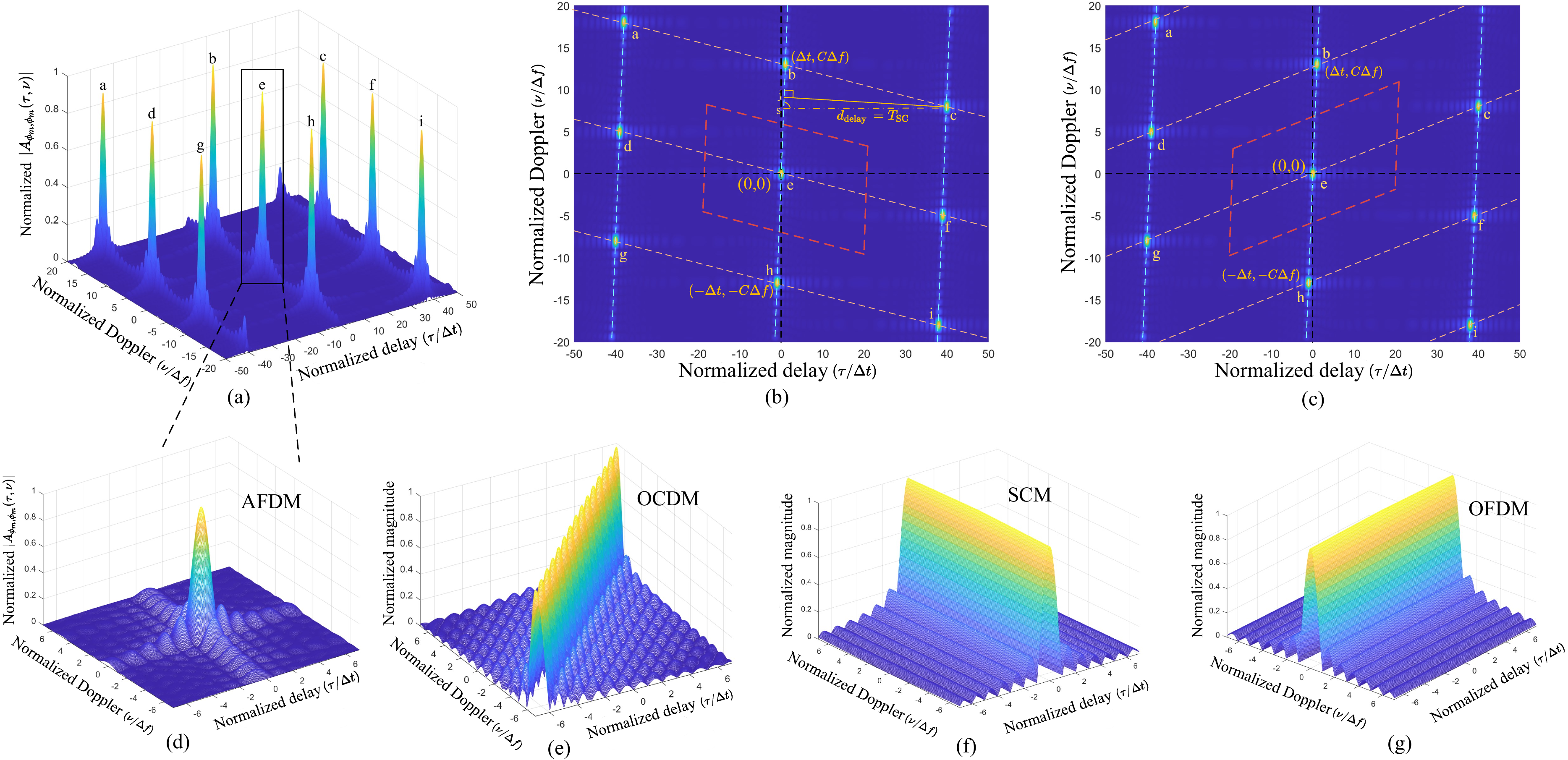}
	\vspace{-0.35cm}
	\caption{Subcarrier/symbol AAFs of different waveforms, $N=512$, $C=13$, $m=47$: (a) AAF of the AFDM chirp subcarrier; (b) Planform of the AFDM chirp subcarrier AAF; (c) Planform of the AFDM chirp subcarrier AAF with another type of unambiguity parallelogram; (d) Zoomed AAF of the AFDM chirp subcarrier; (e) AAF of the OCDM chirp subcarrier; (f) AAF of the SCM symbol; (g) AAF of the OFDM subcarrier.}
	\vspace{-0.41cm}
	\label{fig.3-1}
\end{figure*}

\subsection{Derivation of the AAF of AFDM Chirp Subcarriers}
We first derive the AAF of the $m$th chirp subcarrier, which can be formulated as
\begin{equation}
	A_{\phi_{m},\phi_{m}}(\tau, \nu)= \int_{-\infty}^{\infty}
	\phi_{m}(t)
	\phi^{*}_{m}(t-\tau)e^{-j2\pi\nu t}\text{d}t.
	\label{eq24.01.11.1}
\end{equation}
 For ease of derivation, we rewrite $\phi_{m}(t)$ in (\ref{eq2-3}) as
\begin{equation} 
	\phi_{m}(t) = e^{j 2 \pi\left(c_{2}m^{2}+\tilde{c}_{1}t^{2}+\frac{m}{T}t-\frac{q_{m}(t)}{\Delta t}t\right)}, \ \ \
	0 \le t < T,
	\label{eq24.01.11.2}
\end{equation}
where
\begin{equation} 
	q_{m}(t) = q, \quad
	t_{m,q} \le t < t_{m,q+1}
	\label{eq24.01.11.3}
\end{equation}
is a step function that indicates the intentional spectrum wrapping in chirp subcarriers. Substituting (\ref{eq24.01.11.2}) into (\ref{eq24.01.11.1}), we have (\ref{eq24.01.11.4}) (shown at the top of the next page), which shows no relevance to  $c_{2}$.
\begin{figure*}
		\begin{align}
		A_{\phi_{m},\phi_{m}}(\tau, \nu)  &= \int_{-\infty}^{\infty} e^{j 2 \pi\left(c_{2}m^{2}+\tilde{c}_{1}t^{2}+\frac{m}{T}t-\frac{q_{m}(t)}{\Delta t}t\right)}
		e^{-j 2 \pi\left[c_{2}m^{2}+\tilde{c}_{1}(t-\tau)^{2}+\frac{m}{T}(t-\tau)-\frac{q_{m}(t-\tau)}{\Delta t}(t-\tau)\right]}e^{-j2\pi\nu t}\text{d}t \notag\\
		&=\int_{-\infty}^{\infty}
		\underbrace{
			e^{j 2 \pi
		\left(2\tilde{c}_{1}\tau 
				-\frac{q_{m}(t)-q_{m}(t-\tau)}{\Delta t}-\nu \right)t}}_{\mathcal{F}^{\tau, \nu}_{m,\tilde{c}_{1}}(t)} 
		\underbrace{
			e^{j 2 \pi\left(\frac{m}{T}\tau-\frac{q_{m}(t-\tau)}{\Delta t}\tau-\tilde{c}_{1}\tau^{2}\right)
		}}_{\mathcal{P}^{\tau}_{m,\tilde{c}_{1}}(t)} \text{d}t.
		\label{eq24.01.11.4}	
	\end{align}
\hrulefill
\end{figure*}
Note that owing to the piecewise property of AFDM chirp subcarriers, the integral in (\ref{eq24.01.11.4}) should also be calculated in a piecewise way correspondingly, whose exact expression shows a slight difference with respect to the value of $\tau$. Due to space limitations, we only demonstrate one representative case, from which the remaining cases can be obtained with minor modifications. To proceed, we define an indicator function $\tilde{q}_{m}(\tau)$ as 
\begin{equation} 
	\tilde{q}_{m}(\tau) = q+1, \quad
	t_{m,q} \le \tau < t_{m,q+1}
	\label{eq24.01.11.5}
\end{equation}
and assume that
\begin{equation} 
	\tau \in \big[kT_{\text{SC}},  \min\{(T_{\text{SC}}-t_{m,1}),t_{m,1}\}+kT_{\text{SC}}\big],
	\label{eq24.01.11.6}
\end{equation}
where $k = 0,1,\dots, C-1$. Then, (\ref{eq24.01.11.4}) can be derived as  (\ref{eq24.01.12.1}), which is based on the fact that $\phi_{m}(t)$ has finite support over the interval $[0, T]$ and it can be further decomposed by substituting (\ref{eq24.01.08.1}) and (\ref{eq24.01.11.3}), as shown in (\ref{eq24.01.12.2}). 
\begin{figure*}
	\begin{align}
		&A_{\phi_{m},\phi_{m}}(\tau, \nu)\overset{(\ref{eq24.01.11.6})}{=} \int_{\tau}^{T} 
		e^{j 2 \pi
			\left(2\tilde{c}_{1}\tau 
			-\frac{q_{m}(t)-q_{m}(t-\tau)}{\Delta t}-\nu \right)t}
		e^{j 2 \pi\left(\frac{m}{T}-\frac{q_{m}(t-\tau)}{\Delta t}-\tilde{c}_{1}\tau\right)\tau
		} \text{d}t \label{eq24.01.12.1}\\
		&\overset{(\ref{eq24.01.08.1},\ref{eq24.01.11.3})}{=}
		\sum_{i=0}^{C-\tilde{q}_{m}(\tau)}
		\Bigg(
		\underbrace{
			\int_{t_{m,i}+\tau}^{t_{m,i+\tilde{q}_{m}(\tau)}} 
			e^{j 2 \pi
				\left(2\tilde{c}_{1}\tau 
				-\frac{\tilde{q}_{m}(\tau)-1}{\Delta t}-\nu \right)t} 
			e^{j 2 \pi\left(\frac{m}{T}-\frac{i}{\Delta t}-\tilde{c}_{1}\tau\right)\tau
			} \text{d}t}_{\mathcal{S}_{a}^{i}} 
		+\underbrace{
			\int_{t_{m,i+\tilde{q}_{m}(\tau)}}^{t_{m,i+1}+\tau} 
			e^{j 2 \pi
				\left(2\tilde{c}_{1}\tau 
				-\frac{\tilde{q}_{m}(\tau)}{\Delta t}-\nu \right)t} 
			e^{j 2 \pi\left(\frac{m}{T}-\frac{i}{\Delta t}-\tilde{c}_{1}\tau\right)\tau
			} \text{d}t}_{\mathcal{S}_{b}^{i}}
		\Bigg)  \notag\\
		&\qquad \qquad \qquad \qquad \qquad \qquad \qquad \qquad \qquad  \qquad  
		+\underbrace{\int_{t_{m,C-\tilde{q}_{m}(\tau)+1}+\tau}^{T} e^{j 2 \pi
				\left(2\tilde{c}_{1}\tau 
				-\frac{\tilde{q}_{m}(\tau)-1}{\Delta t}-\nu \right)t} 
			e^{j 2 \pi\left(\frac{m}{T}-\frac{C-\tilde{q}_{m}(\tau)+1}{\Delta t}-\tilde{c}_{1}\tau\right)\tau
			} \text{d}t}_{\mathcal{S}_{a}^{C-\tilde{q}_{m}(\tau)+1}} \label{eq24.01.12.2}\\
			&=
			e^{j 2 \pi \left(\frac{m\tau}{T}-\tilde{c}_{1}\tau^{2}
			\right)}
		\Bigg[
		\sum_{i=0}^{C-\tilde{q}_{m}(\tau)}	
		e^{-j 2 \pi\frac{i}{\Delta t}\tau} 
		\Bigg( 
		\mathcal{I}\left(\mathcal{E}^{\tau, \nu}_{m,\tilde{c}_{1}}, \left(t_{m,i}+\tau\right), \left(t_{m,i+\tilde{q}_{m}(\tau)}\right)\right)
		+
		\mathcal{I}\left(\tilde{\mathcal{E}}^{\tau, \nu}_{m,\tilde{c}_{1}}, \left(t_{m,i+\tilde{q}_{m}(\tau)}\right), \left(t_{m,i+1}+\tau\right)\right)
		\Bigg)  \notag\\
		&\qquad \qquad \qquad \qquad \qquad \qquad \qquad \qquad \qquad  \qquad  \qquad \qquad \quad 
		+ e^{-j 2 \pi\frac{C-\tilde{q}_{m}(\tau)+1}{\Delta t}\tau
		}
		\mathcal{I}\left(\mathcal{E}^{\tau, \nu}_{m,\tilde{c}_{1}}, \left(t_{m,C-\tilde{q}_{m}(\tau)+1}+\tau\right), T\right)	\Bigg]. 
		\label{eq24.01.11.7}
	\end{align}
	\hrulefill
\end{figure*}
Finally, after some algebraic operations, we arrive at (\ref{eq24.01.11.7}), where the integral indicator
\begin{align}
	\mathcal{I}\left(c, t_{1}, t_{2}\right)  \triangleq \int_{t_{1}}^{t_{2}}  e^{j2\pi ct}\text{d}t= \left\{\begin{array}{ll}
	\frac{e^{j2\pi ct_{2}}-e^{j2\pi ct_{1}}}{j2\pi c}, &c \neq 0 \\
	t_{2}-t_{1},& c=0 
\end{array}\right. 
	\label{eq24.01.12.3}
\end{align}
and 
\begin{equation}
\mathcal{E}^{\tau, \nu}_{m,\tilde{c}_{1}}=2\tilde{c}_{1}\tau 
-\frac{\tilde{q}_{m}(\tau)-1}{\Delta t}-\nu, \ \tilde{\mathcal{E}}^{\tau, \nu}_{m,\tilde{c}_{1}}=\mathcal{E}^{\tau, \nu}_{m,\tilde{c}_{1}}-\frac{1}{\Delta t}.
\label{eq24.01.15.13}
\end{equation}
Based on the above-derived results, we visualize $A_{\phi_{m},\phi_{m}}(\tau, \nu)$ in Fig. \ref{fig.3-1} and demonstrate its prominent properties in the following. 
\begin{figure}[tbp]
	\centering
	\includegraphics[width=0.47\textwidth,height=0.381\textwidth]{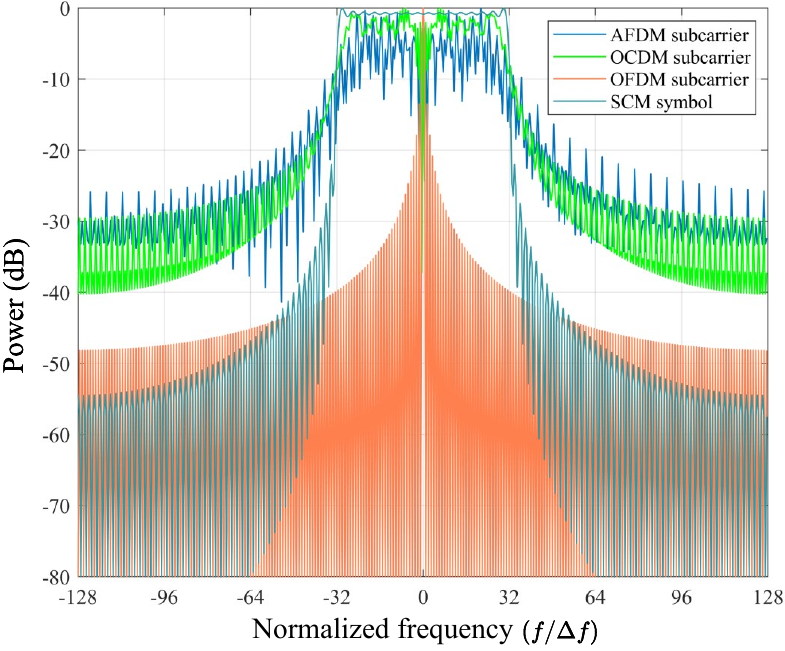}
	\caption{Comparison of normalized power spectral density (PSD) for a single subcarrier/symbol in OCDM, OFDM, SCM, and AFDM ($T\approx 66.67 \mu$s,  $N=64$, $C =3$, and  $c_{2} =\sqrt{2}$).}
	\label{fig.3-2}
\end{figure}

\subsection{Properties of the AAF of AFDM Chirp Subcarriers}
\label{sec3-2}
\begin{property}[“Spike-like” local property] 
	\textup{ The AAF of AFDM chirp subcarriers exhibit a “spike-like” pulse locally, whose mainlobe widths are approximately   $\Delta t =\frac{1}{B}$ and $\Delta f =\frac{1}{T}$ along the delay and Doppler dimensions, respectively, as shown in Fig. \ref{fig.3-1}(d)}.
\end{property}

The “spike-like” local property of the AAF of AFDM chirp subcarriers can be attributed to their time-frequency spanning nature. This characteristic allows them to fully occupy the time and frequency resource of an AFDM frame, providing delay perceptibility with a resolution of $\frac{1}{B}$ and Doppler perceptibility with a resolution of $\frac{1}{T}$ simultaneously. This is a significant difference that distinguishes AFDM from conventional SCM and OFDM, where each SCM symbol occupies a bandwidth of $B_{\text{SCM}}^{\text{symbol}}=B$ and has a duration of $T_{\text{SCM}}^{\text{symbol}}=\frac{2T}{N}$, while an OFDM subcarrier occupies a bandwidth of $B_{\text{OFDM}}^{\text{subcarrier}}=\frac{2B}{N}$ and has a duration of $T_{\text{OFDM}}^{\text{subcarrier}}=T$ approximately, as shown in Fig. \ref{fig.3-2}. Furthermore, the AAFs of a SCM symbol and an OFDM subcarrier are provided in Fig. \ref{fig.3-1}(f) and Fig. \ref{fig.3-1}(g), respectively. We can see that the delay mainlobe width of the AAF of a SCM symbol is $\frac{1}{B}$, while its Doppler mainlobe width is expected to be $B$, which means a SCM symbol only possesses delay perceptibility. Similarly, the Doppler mainlobe width of the AAF of an OFDM subcarrier is $\frac{1}{T}$ approximately, while its delay mainlobe width is expected to be $T$, which means that an OFDM subcarrier only possesses Doppler perceptibility \cite{bb25.01.21.2}. This clearly demonstrates the advantage of AFDM chirp subcarriers over SCM symbols and OFDM subcarriers in radar sensing. Note that due to the spectrum wrapping phenomenon in AFDM chirp subcarriers, as discussed in Observation \ref{obs1}, AFDM exhibits relatively high out-of-band emission (OOBE), as shown in Fig. \ref{fig.3-2}. This requires stricter spectrum management in practice, placing higher demands on digital-to-analog filters.

\begin{property}[“Periodic-like” global property] 
	\textup{The AAF of AFDM chirp subcarriers exhibit “periodic-like” pulses globally along the rotated Doppler and delay dimensions, as shown in Fig. \ref{fig.3-1}(b)(c)}. 
\end{property}

The “periodic-like” global property can be explained by considering the physical essence of the AAF in (\ref{eq24.01.11.1}) and the subchirp splicing feature of AFDM chirp subcarriers, as summarized in Observation 1. Specifically, the AAF quantifies the degree of correlation between $\phi_{m}(t)$ and $\phi_{m}(t-\tau)e^{j2\pi\nu t}$. For AFDM chirp subcarriers, which are composed of multiple subchirps, the strong correlation requirement leads to the following two conditions:

\textbf{1) \emph{Frequency alignment (FA)}} between the matched subchirps from $\phi_{m}(t)$ and $\phi_{m}(t-\tau)e^{j2\pi\nu t}$.  Fig. \ref{fig.3-3}(a) and Fig. \ref{fig.3-3}(c) illustrate the two cases of frequency alignment for a given delay shift. For ease of demonstration, we define the frequency difference indicator as
\begin{equation}
\mathcal{F}^{\tau, \nu}_{m,\tilde{c}_{1}}(t) =e^{j 2 \pi
	\left(2\tilde{c}_{1}\tau 
	-\frac{q_{m}(t)-q_{m}(t-\tau)}{\Delta t}-\nu \right)t},
\label{eq24.01.15.1}
\end{equation}
which indicates the instant frequency difference between $\phi_{m}(t)$ and $\phi_{m}(t-\tau)e^{j2\pi\nu t}$ at time $t$. Then, the FA condition requires that $\mathcal{F}^{\tau, \nu}_{m,\tilde{c}_{1}}(t)=1$, which means that for a given delay shift $\tau$, frequency alignment condition can be satisfied at time $t$ when
\begin{equation}
 \nu=2\tilde{c}_{1}\tau 
-\frac{q_{m}(t)-q_{m}(t-\tau)}{\Delta t},
	\label{eq24.01.14.1}
\end{equation}
which takes two different values for the two cases shown in Fig. \ref{fig.3-3}(a) and Fig. \ref{fig.3-3}(c), respectively. We can observe that, in each case, a portion of the matched subchirps from $\phi_{m}(t)$ and $\phi_{m}(t-\tau)e^{j2\pi\nu t}$ are overlapped with aligned instantaneous frequencies, while the remaining portions exhibit misaligned instantaneous frequencies. Insightfully, we associate the two parts of the matched subchirps with the decomposed form of the AAF by labeling them as follows: $\mathcal{S}_{a}^{i}$ ($i=0,1,\dots,C-\tilde{q}_{m}(\tau)+1$) and $\mathcal{S}_{b}^{i}$ ($i=0,1,\dots,C-\tilde{q}_{m}(\tau)$) in (\ref{eq24.01.12.2}), respectively, and identify them in Fig. \ref{fig.3-3}(a) and Fig. \ref{fig.3-3}(c). Moreover, let $\mathcal{T}_{\mathcal{S}_{a}^{i}}$ and $\mathcal{T}_{\mathcal{S}_{b}^{i}}$ represent the time period occupied by  $\mathcal{S}_{a}^{i}$ and $\mathcal{S}_{b}^{i}$, respectively, and
\begin{align}
	\mathcal{T}_{\mathcal{S}_{a}} &\triangleq \mathcal{T}_{\mathcal{S}_{a}^{0}}\cup\mathcal{T}_{\mathcal{S}_{a}^{1}}\cup\dots\cup\mathcal{T}_{\mathcal{S}_{a}^{C-\tilde{q}_{m}(\tau)+1}}\label{eq24.01.15.10},\\
	\mathcal{T}_{\mathcal{S}_{b}} &\triangleq \mathcal{T}_{\mathcal{S}_{b}^{0}}\cup\mathcal{T}_{\mathcal{S}_{b}^{1}}\cup\dots\cup\mathcal{T}_{\mathcal{S}_{b}^{C-\tilde{q}_{m}(\tau)}}.
	\label{eq24.01.15.6}
\end{align}
Then, for the case shown in Fig. \ref{fig.3-3}(a), the frequency-aligned parts are $t \in \mathcal{T}_{\mathcal{S}_{a}}$, where
\begin{equation}
	q_{m}(t)-q_{m}(t-\tau) = \tilde{q}_{m}(\tau)-1,  \quad t\in \mathcal{T}_{\mathcal{S}_{a}}
	\label{eq24.01.15.5}
\end{equation}
holds and hence (\ref{eq24.01.14.1}) turns into
\begin{equation}
	\nu=2\tilde{c}_{1}\tau 
	-\frac{\tilde{q}_{m}(\tau)-1}{\Delta t}, \quad \text{for case Fig. \ref{fig.3-3}(a)}.  
	\label{eq24.01.15.9}
\end{equation}
The frequency-misaligned parts are $t\in\mathcal{T}_{\mathcal{S}_{b}}$, where the instant frequency difference is $-\frac{1}{\Delta t}$, i.e., 
\begin{equation}
	\mathcal{F}^{\tau, \nu}_{m,\tilde{c}_{1}}(t) =e^{-j 2 \pi \frac{t}{\Delta t}}, \quad  \text{for case Fig. \ref{fig.3-3}(a)}, \ t\in \mathcal{T}_{\mathcal{S}_{b}}.
	\label{eq24.01.15.14}
\end{equation}
Similarly, for the case in Fig. \ref{fig.3-3}(c), the frequency-aligned parts are $t \in \mathcal{T}_{\mathcal{S}_{b}}$, where
\begin{equation}
	q_{m}(t)-q_{m}(t-\tau) = \tilde{q}_{m}(\tau),  \quad t\in \mathcal{T}_{\mathcal{S}_{b}}
	\label{eq24.01.15.11}
\end{equation}
holds and hence (\ref{eq24.01.14.1}) turns into
\begin{equation}
	\nu=2\tilde{c}_{1}\tau 
	-\frac{\tilde{q}_{m}(\tau)}{\Delta t},\quad \text{for case Fig. \ref{fig.3-3}(c)}. 
	\label{eq24.01.15.12}
\end{equation}
Correspondingly, the frequency-misaligned parts are $t\in\mathcal{T}_{\mathcal{S}_{a}}$, where the instant frequency difference is $\frac{1}{\Delta t}$, i.e., 
\begin{equation}
	\mathcal{F}^{\tau, \nu}_{m,\tilde{c}_{1}}(t) =e^{j 2 \pi
		\frac{t}{\Delta t}}, \quad  \text{for case Fig. \ref{fig.3-3}(c)}, \ t\in \mathcal{T}_{\mathcal{S}_{a}}.
	\label{eq24.01.15.15}
\end{equation}

\begin{figure}[tbp]
	\centering
	\includegraphics[width=0.45\textwidth,height=0.44\textwidth]{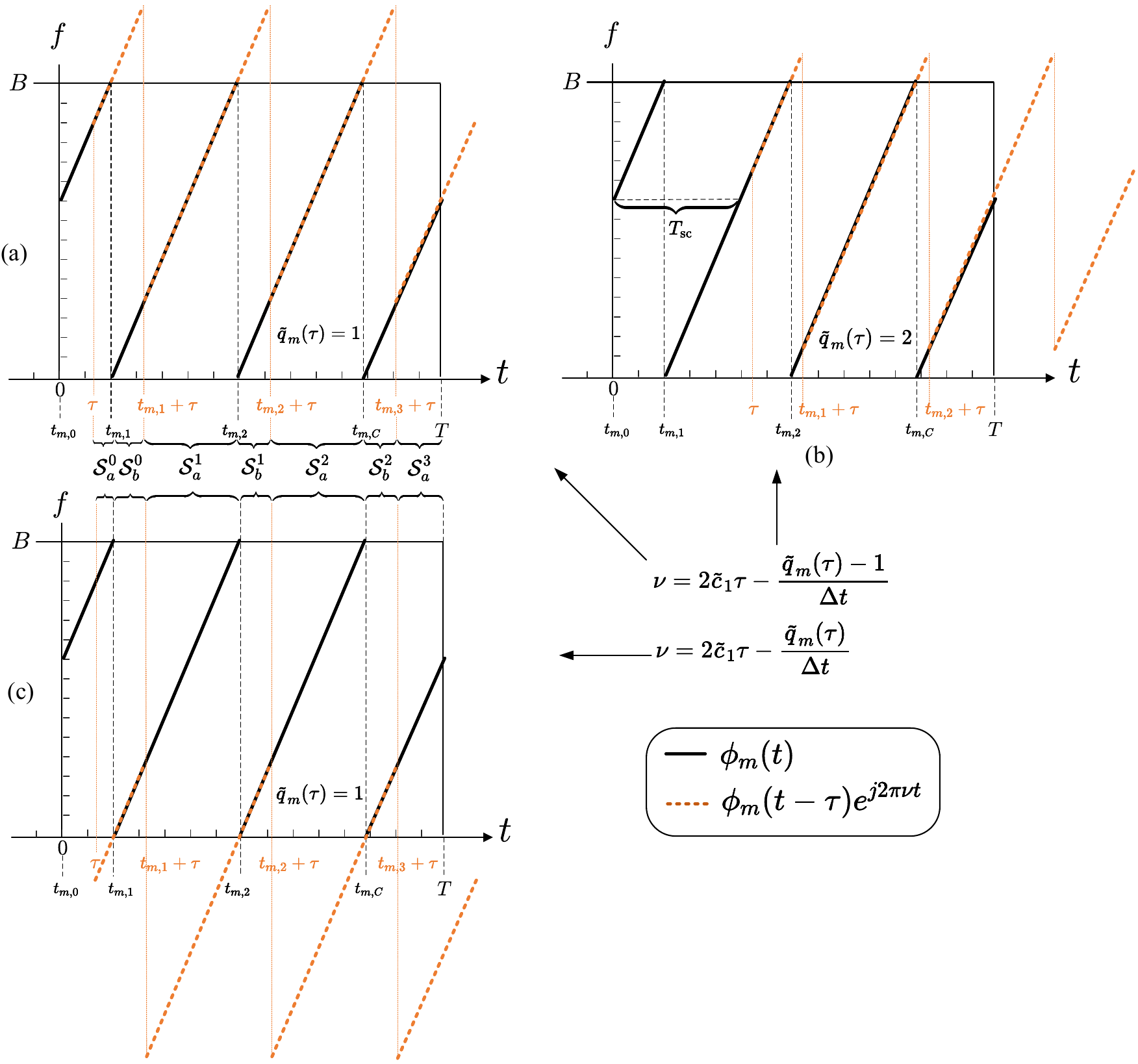}
	\caption{Illustration of the integral in (\ref{eq24.01.11.4}) and (\ref{eq24.01.12.2}), $C=3$.}
	\label{fig.3-3}
\end{figure}

\textbf{2) \emph{Residue phase difference alignment (RPDA)}} among the matched subchirp pairs from $\phi_{m}(t)$ and $\phi_{m}(t-\tau)e^{j2\pi\nu t}$. We further define the residue phase difference indicator as
\begin{equation}
	\mathcal{P}^{\tau}_{m,\tilde{c}_{1}}(t) =e^{j 2 \pi\left(\frac{m}{T}\tau-\frac{q_{m}(t-\tau)}{\Delta t}\tau-\tilde{c}_{1}\tau^{2}\right)},
	\label{eq24.01.15.2}
\end{equation}
which indicates the instant residue phase difference between $\phi_{m}(t)$ and $\phi_{m}(t-\tau)e^{j2\pi\nu t}$ at time $t$. Then,
the RPDA condition suggests that the frequency-aligned parts of all matched subchirp pairs share the same instant residue phase difference, i.e.,  $\mathcal{P}^{\tau}_{m,\tilde{c}_{1}}(t)$ maintains the same value when $t\in \mathcal{T}_{\mathcal{S}_{a}}$ for case Fig. \ref{fig.3-3}(a) or $t\in \mathcal{T}_{\mathcal{S}_{b}}$ for case Fig. \ref{fig.3-3}(c). This means that the influence of variable $t$ in $\frac{q_{m}(t-\tau)}{\Delta t}\tau$ should be eliminated from $\mathcal{P}^{\tau}_{m,\tilde{c}_{1}}(t)$, which can be achieved by letting $\frac{q_{m}(t-\tau)}{\Delta t}\tau$ be an integer. According to (\ref{eq24.01.11.3}),  $q_{m}(t-\tau)$ is always an integer, meaning that the elimination of $t$ from $\mathcal{P}^{\tau}_{m,\tilde{c}_{1}}(t)$ can only hold when
\begin{equation}
	\tau = k \Delta t, \quad k\in\mathbb{Z}.
	\label{eq24.01.14.2}
\end{equation}

Therefore, when (\ref{eq24.01.14.1}) and (\ref{eq24.01.14.2}) are satisfied, the frequency alignment and the residue phase difference alignment are achieved at the frequency-aligned part of the matched subchirps simultaneously. Subsequently, for the case in Fig. \ref{fig.3-3}(a), substituting   (\ref{eq24.01.15.9}) and (\ref{eq24.01.14.2}) into $\mathcal{S}_{a}^{i}$ in (\ref{eq24.01.12.2}) arrives at
\begin{align}
	&\mathcal{S}_{a}^{i} =\int_{t_{m,i}+k \Delta t}^{t_{m,i+\tilde{q}_{m}(k \Delta t)}}
	e^{j 2 \pi\left(\frac{m}{T}-\tilde{c}_{1}k \Delta t\right)k \Delta t
	} \text{d}t \notag \\
&=e^{j 2 \pi\left(\frac{m}{T}-\tilde{c}_{1}k \Delta t\right)k \Delta t} [(t_{m,i+\tilde{q}_{m}(k \Delta t)})-(t_{m,i}+k \Delta t)],
	\label{eq24.01.15.3}
\end{align}
which yields
\begin{align}
	&\sum_{i=0}^{C-\tilde{q}_{m}(\tau)+1}\mathcal{S}_{a}^{i} = e^{j 2 \pi\left(\frac{m}{T}-\tilde{c}_{1}k \Delta t\right)k \Delta t} \notag \\
	& \qquad \quad \ \times \sum_{i=0}^{C-\tilde{q}_{m}(\tau)+1}[(t_{m,i+\tilde{q}_{m}(k \Delta t)})-(t_{m,i}+k \Delta t)],
	\label{eq24.01.15.4}
\end{align}
indicating energy accumulation, which in turn boosts a pulse in $A_{\phi_{m},\phi_{m}}(\tau, \nu)$. In contrast, the frequency-misaligned parts can be calculated by substituting   (\ref{eq24.01.15.14}) and (\ref{eq24.01.14.2}) into $\mathcal{S}_{b}^{i}$ in (\ref{eq24.01.12.2}) as
\begin{align}
	\mathcal{S}_{b}^{i} &=\int_{t_{m,i}+\Delta t}^{t_{m,i+\tilde{q}_{m}(\Delta t)}} e^{-j 2 \pi \frac{t}{\Delta t}}
	e^{j 2 \pi\left(\frac{m}{T}-\tilde{c}_{1}k \Delta t\right)k \Delta t} \text{d}t,
	\label{eq24.01.16.1}
\end{align}
which has a small magnitude compared to $\mathcal{S}_{a}^{i}$ given its integrand is a complex-exponential function with a period of $\Delta t$. A similar result can be obtained for case Fig. \ref{fig.3-3}(c) by substituting   (\ref{eq24.01.15.12}) and (\ref{eq24.01.14.2}) into $\mathcal{S}_{b}^{i}$ in (\ref{eq24.01.12.2}), where an energy accumulation process will be generated by $\sum_{i=0}^{C-\tilde{q}_{m}(\tau)}\mathcal{S}_{b}^{i}$.

\begin{remark} 
	\textup{We can observe from (\ref{eq24.01.15.9}) and (\ref{eq24.01.15.12}) that the relationship between $\nu$ and $\tau$ under the FA and RPDA conditions is a linear function with a slope of $2\tilde{c}_{1}$. This results in pulse “periodicity” along the rotated Doppler dimension with a slope $2\tilde{c}_{1}$ in $A_{\phi_{m},\phi_{m}}(\tau, \nu)$, where two adjacent pulses have a relative delay distance of $\Delta t$ and a relative Doppler distance of $2\tilde{c}_{1}\Delta t=C\Delta f$, as illustrated in Fig. \ref{fig.3-1}(b). }
\end{remark} 

\begin{remark} 
	\textup{The FA and RPDA conditions can also be satisfied by properly sliding $\phi_{m}(t-\tau)e^{j2\pi\nu t}$ in Fig. \ref{fig.3-3}(a) to establish a new group of matched subchirp pairs with $\phi_{m}(t)$, as an example shown in Fig. \ref{fig.3-3}(b). This gives rise to multiple parallel lines with “periodic-like” pulses along the rotated Doppler dimension in $A_{\phi_{m},\phi_{m}}(\tau, \nu)$, as indicated by the light blue dashed lines in Fig. \ref{fig.3-1}(b)(c), where the delay-dimension distance between two adjacent parallel lines is $d_{\text{delay}}=T_{\text{SC}}$ exactly. Moreover, when $\tau = T_{\text{SC}}$ does not satisfy (\ref{eq24.01.14.2}), the pulses localized on different lines are distributed not along the delay dimension but along a rotated delay dimension, as indicated by the light orange dashed lines in Fig. \ref{fig.3-1}(b)(c).}
\end{remark}  
Consequently, a parallelogram can be virtually outlined by any four adjacent pulses in the DD plane of the AAF, either in the form of Fig. \ref{fig.3-1}(b) or Fig. \ref{fig.3-1}(c), effectively forming a virtual grid. Building on this analysis, we arrive at the following interesting conclusion, stated as Corollary 1.
 
\begin{corollary} 
 	The areas of the parallelograms in the AAF of AFDM chirp subcarriers are equal to one.
\end{corollary}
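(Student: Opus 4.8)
The plan is to regard the localized pulses of $A_{\phi_{m},\phi_{m}}(\tau,\nu)$ as the nodes of a regular grid in the delay--Doppler plane, so that each of the two candidate unambiguity parallelograms in Fig.~\ref{fig.3-1}(b) and Fig.~\ref{fig.3-1}(c) is the fundamental cell spanned by two edge vectors. Its area is then simply the magnitude of the $2\times 2$ determinant formed by those two vectors, and because the area of a fundamental cell is the covolume of the grid, it is invariant under any change of spanning basis. Hence it suffices to evaluate the determinant for one form and invoke this invariance for the other.

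First I would read the two edge vectors directly off Remarks~1 and~2. By Remark~1, two pulses adjacent along the rotated Doppler dimension are separated by $\mathbf{e}_{1}=(\Delta t,\,2\tilde{c}_{1}\Delta t)$, i.e.\ a relative delay of $\Delta t$ and a relative Doppler of $2\tilde{c}_{1}\Delta t=C\Delta f$. By Remark~2, two pulses on adjacent parallel lines are separated along the delay axis by $T_{\text{SC}}$; in the form of Fig.~\ref{fig.3-1}(b) this second edge carries no Doppler offset, giving $\mathbf{e}_{2}=(T_{\text{SC}},\,0)$. With the elementary identities $\Delta f=\tfrac{1}{T}$ and $T_{\text{SC}}=\tfrac{T}{C}$, the cell area becomes
\[
\mathcal{A}=\left|\det\!\begin{bmatrix}\Delta t & 2\tilde{c}_{1}\Delta t\\ T_{\text{SC}} & 0\end{bmatrix}\right|
=2\tilde{c}_{1}\Delta t\,T_{\text{SC}}
=C\Delta f\cdot\frac{T}{C}
=\Delta f\,T=1.
\]
For the form of Fig.~\ref{fig.3-1}(c), the second edge is tilted onto the rotated delay dimension, but since (b) and (c) are two primitive-cell choices for the very same pulse grid, their spanning bases are related by a unimodular change of basis (an integer matrix of determinant $\pm1$), which leaves the determinant unchanged; the area is therefore again $1$.

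The main obstacle I anticipate is not this final determinant but the rigorous justification of the two edge vectors, in particular confirming that the cross-line edge has a vanishing Doppler component in case (b) and that in case (c) the tilted edge, together with $\mathbf{e}_{1}$, still forms a \emph{primitive} basis of the grid rather than a coarser cell that secretly encloses additional pulses. Settling this requires returning to the frequency-alignment condition (\ref{eq24.01.14.1}) and the residue-phase-difference-alignment condition (\ref{eq24.01.14.2}), $\tau=k\Delta t$, to verify that the complete set of pulse positions is exactly the grid generated by $\mathbf{e}_{1}$ and $\mathbf{e}_{2}$, a point that is most delicate precisely when $T_{\text{SC}}$ is not an integer multiple of $\Delta t$.
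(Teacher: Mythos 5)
Your proposal is correct and is essentially the paper's own argument: both proofs read the parallelogram's defining data straight out of Remarks 1 and 2 --- the pulse spacing $(\Delta t,\,2\tilde{c}_{1}\Delta t)$ along the rotated Doppler dimension and the delay-dimension offset $T_{\text{SC}}$ between adjacent lines --- and reduce the area to $2\tilde{c}_{1}\Delta t\,T_{\text{SC}} = C\Delta f\cdot\frac{T}{C}=1$. The only differences are presentational: the paper (Appendix \ref{APP1}) computes base times height via an auxiliary construction and the Pythagorean theorem instead of your $2\times 2$ determinant, and it handles the second parallelogram form of Fig.~\ref{fig.3-1}(c) by a ``without loss of generality'' rather than your unimodular change-of-basis remark, while neither proof attempts the lattice-primitivity verification you flag (both take Remarks 1 and 2 as fully characterizing the pulse grid).
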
 
\begin{proof}
	See Appendix \ref{APP1}.
\end{proof}

Corollary 1 suggests a tradeoff between the ranges of delay estimation and Doppler estimation when using an AFDM chirp subcarrier for unambiguous sensing, a topic that will be further discussed in the next subsection. 

\begin{remark} 
	\textup{The characteristics of aperiodic AAF described above hold for all AFDM chirp subcarriers, regardless of whether a CPP is appended. This is because the time-frequency distribution of all AFDM chirp subcarriers with finite support follows the same subchirp splicing properties outlined in Observation 1, independent of the presence of a CPP.}
\end{remark} 

\subsection{Unambiguous Sensing with AFDM Chirp Subcarriers}
Due to the “periodic-like” global property discussed above and the MF process introduced in Sec. \ref{sec2-3}, ambiguous sensing may arise when the pulses in $R(\tau, \nu)$ cannot be accurately associated with their corresponding target's AAF. To address this issue, we analyze the conditions necessary for unambiguous sensing in Proposition 1.

\begin{proposition} 
	Unambiguous sensing can be achieved when the delay and Doppler shifts of all targets in the sensing channel are confined within a region that matches the shape of the parallelogram formed by the AAF of AFDM chirp subcarriers.
	\label{prop1}
\end{proposition}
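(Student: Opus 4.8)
The plan is to read the proposition off the matched-filter output (\ref{eq24.01.09.5}) together with the periodic-like lattice structure of the AAF established in Property 1, Property 2, Remark 1, and Remark 2. First I would specialize (\ref{eq24.01.09.5}) to a single chirp subcarrier $s=\phi_m$, giving $R(\tau,\nu)=\sum_{i=1}^{P}\tilde{h}_i A_{\phi_m,\phi_m}(\tau-\tau_i,\nu-\nu_i)+\tilde{w}(t)$, so that every target contributes a complete copy of the subcarrier AAF re-centered at its own delay-Doppler pair $(\tau_i,\nu_i)$. Sensing then amounts to detecting the pulses of $R(\tau,\nu)$ and attributing each to the correct target, and ambiguity is precisely the failure of this attribution to be unique.

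Second, I would formalize the pulse lattice. By Remark 1 the dominant pulses of $A_{\phi_m,\phi_m}$ lie on parallel lines of slope $2\tilde{c}_1$, with adjacent pulses on a line separated by $\mathbf{g}_1=(\Delta t,\ C\Delta f)^{T}$; by Remark 2 consecutive lines are separated in (rotated) delay by $T_{\text{SC}}$, furnishing a second generator $\mathbf{g}_2$ whose delay component is $T_{\text{SC}}$. Hence the peak locations form a lattice $\mathcal{L}=\{a\mathbf{g}_1+b\mathbf{g}_2:a,b\in\mathbb{Z}\}$ whose fundamental cell is exactly the parallelogram of Fig. \ref{fig.3-1}(b)(c) (the two admissible cell shapes corresponding to two choices of $\mathbf{g}_2$), and by Corollary 1 this cell has unit area. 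The pulse at the origin is dominant and every other pulse is a lattice-shifted replica of decaying magnitude.

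Third, I would close the argument by a fundamental-domain (aliasing) argument. Let $\Pi$ be one parallelogram cell of $\mathcal{L}$; its lattice translates tile the delay-Doppler plane, overlapping only on a null set. If the admissible target region is confined to $\Pi$, then each target $i$ places its main pulse at $(\tau_i,\nu_i)\in\Pi$, while all of its replicas sit at $(\tau_i,\nu_i)+a\mathbf{g}_1+b\mathbf{g}_2$ with $(a,b)\neq(0,0)$ and therefore fall outside $\Pi$. Consequently, within $\Pi$ the observed pulses are in one-to-one correspondence with the targets and appear exactly at $\{(\tau_i,\nu_i)\}$: no replica intrudes to create a ghost target, and no pair of distinct true targets can be confused for a single target together with one of its replicas, since that would require $(\tau_i-\tau_j,\nu_i-\nu_j)$ to be a nonzero element of $\mathcal{L}$, which two points of a single fundamental cell cannot realize. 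This yields unambiguous recovery precisely when the targets occupy a region matching the parallelogram shape, and Corollary 1 then forces the delay-Doppler range tradeoff noted below it.

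The step I expect to be the main obstacle is the rigorous identification of the lattice geometry: pinning down $\mathbf{g}_2$ and verifying that $\mathbf{g}_1,\mathbf{g}_2$ generate \emph{all} dominant peaks (and both cell shapes of Fig. \ref{fig.3-1}(b) versus (c)) requires tracking the FA and RPDA conditions (\ref{eq24.01.14.1}) and (\ref{eq24.01.14.2}) carefully across subchirp boundaries. A secondary subtlety is that the AAF is only periodic-\emph{like}: the replicas have strictly decaying magnitude and finite mainlobe width, so I would stress that the claim concerns the \emph{locations} of the peaks, i.e. uniqueness modulo $\mathcal{L}$, with the spike-like localization of Property 1 guaranteeing that the peaks inside $\Pi$ remain resolvable, rather than any statement about exact periodicity.
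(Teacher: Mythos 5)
Your proposal is correct and takes essentially the same approach as the paper: the paper's proof likewise reads the matched-filter output (\ref{eq24.01.09.5}) as a superposition of shifted copies of $A_{\phi_m,\phi_m}(\tau,\nu)$ and argues that, when all targets lie within the detection region of the origin pulse (the parallelogram cell obtained by integrally shifting the virtual grid of Fig.~\ref{fig.3-1}(b)(c)), every pulse of every copy remains confined to its own detection region, yielding one-to-one attribution of pulses to targets. Your lattice/fundamental-domain formalization with generators $\mathbf{g}_1,\mathbf{g}_2$ is simply a sharper phrasing of that same tiling argument.
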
 
\begin{proof}
	The “periodic-like” pulse distribution of the AAF of AFDM chirp subcarriers essentially allocates a detection region for each localized pulse, which can be acquired by integrally shifting the virtual grids in Fig. \ref{fig.3-1}(b)(c). The detection region of the origin pulse is illustrated by the red dashed lines. According to (\ref{eq24.01.09.5}), when transmitting $\phi_{m}(t)$, the output of the MF process $R(\tau, \nu)$ is a superposition of multiple scaled and shifted instances of $A_{\phi_{m},\phi_{m}}(\tau, \nu)$, where each localized pulse is uniformly shifted by the delay and Doppler shifts of the corresponding sensing target. Subsequently, when all targets fall within the detection region of the original pulse, all pulses in $A_{\phi_{m},\phi_{m}}(\tau, \nu)$ will uniformly spread only within their own detection region in $R(\tau, \nu)$, avoiding unambiguous sensing perfectly.  Moreover, the detection regions have an identical shape as the parallelogram of the AAF of AFDM chirp subcarriers, and we hence refer to them as \textbf{\emph{unambiguity parallelogram}} and refer to Proposition 1 as \textbf{\emph{unambiguous sensing condition}} for the AFDM chirp subcarrier. This completes the proof.
\end{proof}

\begin{remark} 
	\textup{The above proposition suggests that, for a given sensing channel and time-frequency resource, one should choose $C$ in AFDM-ISAC systems sufficiently large by tuning parameter $c_{1}$ to satisfy the unambiguous sensing condition. Take a counter-example that will be illustrated in Sec. \ref{sec6}, i.e., an OCDM chirp subcarrier, corresponding to $C=1$, has a tabular unambiguity parallelogram with insufficient Doppler sensing range and hence suffers from ambiguous sensing. Note that $C$ should be set as small as possible to reduce the channel estimation and multi-user access overhead, which is in direct proportion to $c_{1}$. This insight is in line with the discussion on the setting of $c_{1}$ to guarantee full DD representation in the DAFT domain in \cite{bb24.08.27.2}.}
\end{remark}

\section{CAF between Two AFDM Chirp Subcarriers}
\label{sec4}
\begin{figure}[tbp]
	\centering
	\includegraphics[width=0.45\textwidth,height=0.285\textwidth]{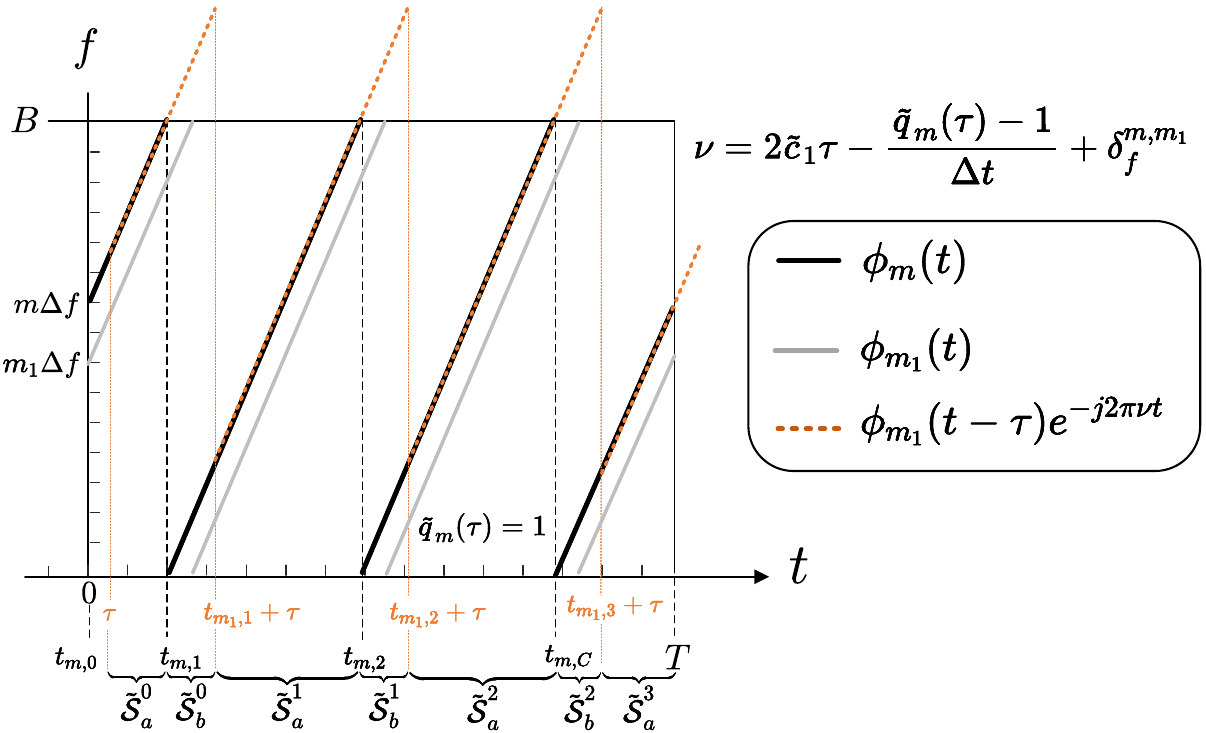}
	\caption{Illustration of the integral in (\ref{eq24.01.19.6}), $C=3$.}
	\label{fig.3-4}
\end{figure}

\begin{figure}[tbp]
	\centering
	\includegraphics[width=0.445\textwidth,height=0.36\textwidth]{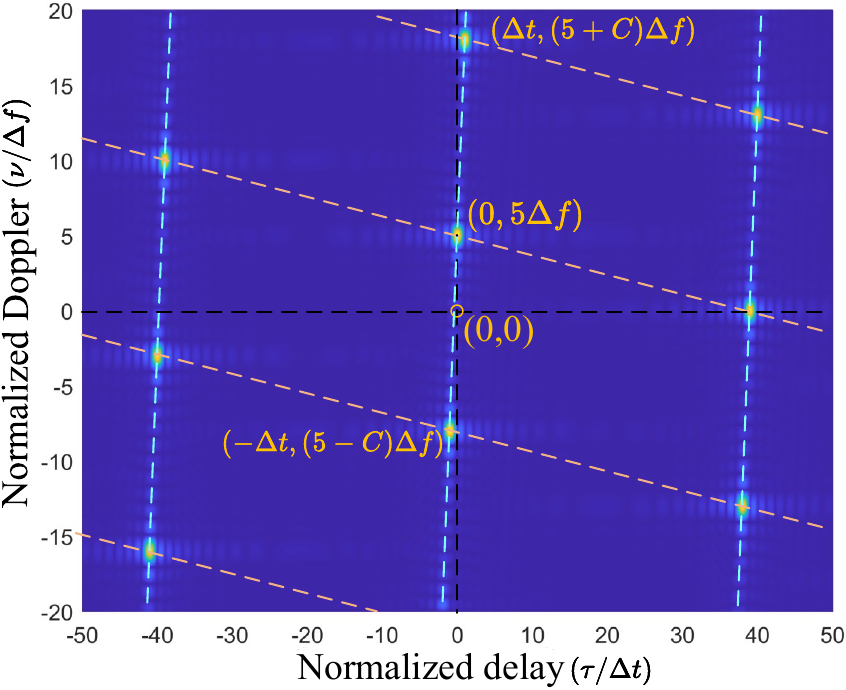}
	\caption{Planform of $\left|	A_{\phi_{m},\phi_{m_{1}}}(\tau, \nu)\right|$,  $N=512$, $C=13$, $c_{2} =\sqrt{2}$, $m=445$, $m_{1}=440$   ($\delta_{f}^{m,m_{1}}=5\Delta f$).}
	\label{fig.3-5}
\end{figure}
In this section, we investigate the CAF between two different AFDM chirp subcarriers. The aperiodic CAF between the $m$th and $m_{1}$th subcarriers can be formulated as 
\begin{align}
	A_{\phi_{m},\phi_{m_{1}}}(\tau, \nu)= \int_{-\infty}^{\infty}
	\phi_{m}(t)
	\phi^{*}_{m_{1}}(t-\tau)e^{-j2\pi\nu t}\text{d}t \label{eq24.01.19.5}
\end{align}
and (\ref{eq24.01.19.4}) (shown at the top of the next page). Similar to the AAF of the AFDM chirp subcarriers, the integral in (\ref{eq24.01.19.4}) should also be calculated in a piecewise way, whose exact expression exhibits slight variations depending on the value of $\tau$ and the relationship between $m$ and $m_{1}$. Without loss of generality, we present one representative case in detail. Assume that 
\begin{equation}
	\tau \in \big[kT_{\text{SC}}, \ \min\{(T_{\text{SC}}-t_{m_{1},1}),t_{m,1}\}+kT_{\text{SC}}\big],
	\label{eq24.01.20.1}
\end{equation}
$k = 0,1,\dots, C-1$, and $m>m_{1}$, then (\ref{eq24.01.19.4}) can be  decomposed as (\ref{eq24.01.19.6}) by utilizing (\ref{eq24.01.08.1}), (\ref{eq24.01.11.3}), and the fact that $\phi_{m}(t)$ and $\phi_{m_{1}}(t)$ have finite support over the interval $[0, T]$, where 
\begin{align}
	\delta_{f}^{m,m_{1}}= (m-m_{1})\Delta f \label{eq24.01.25.10}
\end{align}
denotes the frequency difference between the two chirp subcarriers. Finally, after some algebraic manipulations, we obtain the final result in (\ref{eq24.01.19.11}), which exhibits a similar form to $A_{\phi_{m},\phi_{m}}(\tau, \nu)$ in (\ref{eq24.01.11.7}). Based on this, we visualize $A_{\phi_{m},\phi_{m_{1}}}(\tau, \nu)$ in Fig. \ref{fig.3-5} and propose the following proposition:
\begin{figure*}
	\begin{align}
		&A_{\phi_{m},\phi_{m_{1}}}(\tau, \nu) \overset{(\ref{eq24.01.11.2})}{=} \int_{-\infty}^{\infty} 
		e^{j 2 \pi\left(c_{2}m^{2}+\tilde{c}_{1}t^{2}+\frac{m}{T}t-\frac{q_{m}(t)}{\Delta t}t\right)}
		e^{-j 2 \pi\left[c_{2}m_{1}^{2}+\tilde{c}_{1}(t-\tau)^{2}+\frac{m_{1}}{T}(t-\tau)-\frac{q_{m_{1}}(t-\tau)}{\Delta t}(t-\tau)\right]}e^{-j2\pi\nu t}\text{d}t
		\label{eq24.01.19.4} \\
		&\overset{(\ref{eq24.01.20.1})}{=}
		e^{j 2 \pi \left[c_{2}(m^{2}-m_{1}^{2})-\tilde{c}_{1}\tau^{2} + \frac{m\tau}{T}
			\right]}
		\Bigg[
		\sum_{i=0}^{C-\tilde{q}_{m}(\tau)}
		\Bigg(\underbrace{e^{-j 2 \pi\frac{i}{\Delta t}\tau} 
		\int_{t_{m_{1},i}+\tau}^{t_{m,i+\tilde{q}_{m}(\tau)}}
		e^{j 2 \pi
			\left(\mathcal{E}^{\tau, \nu}_{m,\tilde{c}_{1}}+\delta_{f}^{m,m_{1}}
			\right)t} \text{d}t}_{\tilde{\mathcal{S}_{a}^{i}}}\notag\\
		&\qquad \quad  +
		\underbrace{e^{-j 2 \pi\frac{i}{\Delta t}\tau} \int_{t_{m,i+\tilde{q}_{m}(\tau)}}^{t_{m_{1},i+1}+\tau} 
		e^{j 2 \pi
			\left(\tilde{\mathcal{E}}^{\tau, \nu}_{m,\tilde{c}_{1}}+\delta_{f}^{m,m_{1}} 
			\right)t} \text{d}t}_{\tilde{\mathcal{S}}_{b}^{i}}  
		\Bigg)  
		+
		\underbrace{e^{-j 2 \pi\frac{C-\tilde{q}_{m}(\tau)+1}{\Delta t}\tau}  
		\int_{t_{m_{1},C-\tilde{q}_{m}(\tau)+1}+\tau}^{T} 
		e^{j 2 \pi
			\left(\mathcal{E}^{\tau, \nu}_{m,\tilde{c}_{1}}+\delta_{f}^{m,m_{1}} 
			\right)t} \text{d}t }_{\tilde{\mathcal{S}}_{a}^{C-\tilde{q}_{m}(\tau)+1}}\Bigg] \label{eq24.01.19.6} \\
		&=
		e^{j 2 \pi \left[c_{2}(m^{2}-m_{1}^{2})-\tilde{c}_{1}\tau^{2} + \frac{m\tau}{T}
			\right]}
		\Bigg[
		\sum_{i=0}^{C-\tilde{q}_{m}(\tau)}	
		e^{-j 2 \pi\frac{i}{\Delta t}\tau} 
		\Bigg( 
		\mathcal{I}\left(\left(\mathcal{E}^{\tau, \nu}_{m,\tilde{c}_{1}}+\delta_{f}^{m,m_{1}}\right), \left(t_{m_{1},i}+\tau\right), \left(t_{m,i+\tilde{q}_{m}(\tau)}\right)\right) \notag \\
		&
		+ \mathcal{I}\left(\left(\tilde{\mathcal{E}}^{\tau, \nu}_{m,\tilde{c}_{1}}+\delta_{f}^{m,m_{1}}\right), \left(t_{m,i+\tilde{q}_{m}(\tau)}\right), \left(t_{m_{1},i+1}+\tau\right)\right)
		\Bigg)
		+ e^{-j 2 \pi\frac{C-\tilde{q}_{m}(\tau)+1}{\Delta t}\tau
		}
		\mathcal{I}\left(\left(\mathcal{E}^{\tau, \nu}_{m,\tilde{c}_{1}}+\delta_{f}^{m,m_{1}}\right), \left(t_{m_{1},C-\tilde{q}_{m}(\tau)+1}+\tau\right), T\right)	\Bigg]. 
		\label{eq24.01.19.11}
	\end{align}
	\hrulefill
\end{figure*}

\begin{proposition} 
	The CAF between two AFDM chirp subcarriers establishes the same local and global properties as the AAF of the AFDM chirp subcarrier but includes an additional shift of $\delta_{f}^{m,m_{1}}$ along the Doppler dimension.
\end{proposition}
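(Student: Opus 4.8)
The plan is to read the claim directly off the closed-form expression (\ref{eq24.01.19.11}) and then corroborate it physically by re-running the frequency-alignment/residue-phase analysis of Sec.~\ref{sec3-2} for two distinct subcarriers. The observation I would establish first is the elementary identity
\begin{equation}
\mathcal{E}^{\tau, \nu}_{m,\tilde{c}_{1}}+\delta_{f}^{m,m_{1}} = \mathcal{E}^{\tau, \nu-\delta_{f}^{m,m_{1}}}_{m,\tilde{c}_{1}}, \qquad \tilde{\mathcal{E}}^{\tau, \nu}_{m,\tilde{c}_{1}}+\delta_{f}^{m,m_{1}} = \tilde{\mathcal{E}}^{\tau, \nu-\delta_{f}^{m,m_{1}}}_{m,\tilde{c}_{1}},
\end{equation}
which follows at once from the definition (\ref{eq24.01.15.13}), since $\nu$ enters $\mathcal{E}^{\tau,\nu}_{m,\tilde{c}_1}$ only through the additive term $-\nu$. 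Consequently every $\nu$-dependent integrand in (\ref{eq24.01.19.11}) is exactly the corresponding integrand of the AAF expression (\ref{eq24.01.11.7}) evaluated at the translated Doppler argument $\nu-\delta_{f}^{m,m_{1}}$.

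Next I would dispose of the two discrepancies between (\ref{eq24.01.19.11}) and (\ref{eq24.01.11.7}). The prefactor $e^{j2\pi c_2(m^2-m_1^2)}$ has unit modulus, so it cannot change $\lvert A_{\phi_m,\phi_{m_1}}(\tau,\nu)\rvert$; since the local (spike width) and global (periodicity) properties of Sec.~\ref{sec3-2} are all statements about the magnitude of the AF, this phase is immaterial. Combined with the identity above, this already yields the headline claim: $\lvert A_{\phi_m,\phi_{m_1}}(\tau,\nu)\rvert$ inherits, up to a rigid Doppler translation by $\delta_f^{m,m_1}$, both the spike-like mainlobe of widths $\Delta t$ and $\Delta f$ and the periodic-like pulse lattice of the AAF.

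To make the inheritance structural rather than merely formal, I would recast the physical argument of Sec.~\ref{sec3-2}. The CAF correlates $\phi_m(t)$ with $\phi_{m_1}(t-\tau)e^{j2\pi\nu t}$; both subcarriers are spliced from subchirps of identical analog slope $2\tilde{c}_1$, but their instantaneous-frequency offsets differ by the constant $\delta_f^{m,m_1}=(m-m_1)\Delta f$ arising from the linear terms $\frac{m}{T}t$ versus $\frac{m_1}{T}t$. Hence the frequency-alignment condition (\ref{eq24.01.14.1}) is shifted to $\nu=2\tilde{c}_1\tau-\frac{q_m(t)-q_{m_1}(t-\tau)}{\Delta t}+\delta_f^{m,m_1}$, reproducing the two linear $\nu$--$\tau$ relations (\ref{eq24.01.15.9}) and (\ref{eq24.01.15.12}) with the same slope $2\tilde{c}_1$ and the same inter-pulse spacing but translated by $\delta_f^{m,m_1}$ in $\nu$, while the residue-phase-difference-alignment condition $\tau=k\Delta t$ of (\ref{eq24.01.14.2}) is untouched because $q_{m_1}(t-\tau)$ remains integer-valued. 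The surviving energy-accumulation mechanism that boosts each pulse is therefore identical to the AAF case, so the spike-like local and periodic-like global structure transfers wholesale.

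The step I expect to be the genuine obstacle is controlling the mismatch in the integration limits: in (\ref{eq24.01.19.11}) the endpoints involve the wrapping points $t_{m_1,\cdot}$ of the second subcarrier rather than $t_{m,\cdot}$, so the correspondence with (\ref{eq24.01.11.7}) is not term-by-term exact. I would handle this by noting from (\ref{eq24.01.08.1}) that $t_{m,q}-t_{m_1,q}=\frac{m_1-m}{C}\Delta t$ is a fixed shift independent of $q$, so each subchirp segment is merely translated by a bounded amount of order $\Delta t$; this perturbs only the segment edges, and hence only the sidelobe/edge terms, not the location, width, or periodicity of the dominant accumulated pulses. Making this quantitative reduces to bounding those edge contributions and showing they scale like the already-negligible frequency-misaligned $\mathcal{S}_b^i$ terms of Sec.~\ref{sec3-2}, i.e. as $O(1/C)$ relative to the mainlobe energy, which would complete the argument.
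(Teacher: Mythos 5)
Your first three paragraphs are, in substance, the paper's own proof, and they are correct: the paper proves Proposition 2 by observing that the only difference between the instant-frequency factors of $\tilde{\mathcal{S}}_{a}^{i}$, $\tilde{\mathcal{S}}_{b}^{i}$ in (\ref{eq24.01.19.6}) and those of $\mathcal{S}_{a}^{i}$, $\mathcal{S}_{b}^{i}$ in (\ref{eq24.01.12.2}) is the additive term $\delta_{f}^{m,m_{1}}$, so that the FA lines become (\ref{eq24.01.29.1}) and (\ref{eq24.01.29.2}) --- exactly your shifted FA condition --- while the RPDA condition (\ref{eq24.01.14.2}) is untouched because all subcarriers share the same subchirp-splicing pattern. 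Your translation identity for $\mathcal{E}^{\tau,\nu}_{m,\tilde{c}_{1}}$ and your dismissal of the unit-modulus prefactor $e^{j2\pi c_{2}(m^{2}-m_{1}^{2})}$ are faithful formalizations of points the paper makes in words (the $c_{2}$ remark appears immediately after the paper's proof).

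Where your write-up goes wrong is the final ``obstacle'' step, which the paper does not attempt (and does not need, since the proposition and the paper's proof concern pulse locations and the qualitative pulse structure, not magnitudes). First, the segments are not ``merely translated'': in $\tilde{\mathcal{S}}_{a}^{i}$ the lower endpoint is a wrapping point of $\phi_{m_{1}}$ while the upper endpoint is one of $\phi_{m}$, so for $m>m_{1}$ each frequency-aligned interval is \emph{shortened} by $\frac{m-m_{1}}{C}\Delta t$ and each misaligned interval \emph{lengthened} by the same amount. Second, that quantity is of order $\Delta t$ only when $|m-m_{1}|=O(C)$; in general it can be as large as $\approx T_{\text{SC}}=\frac{N}{C}\Delta t$, and summed over the $\sim C$ segments the change in the accumulated pulse value is $\approx |m-m_{1}|\Delta t$, i.e., $O\left(|m-m_{1}|/N\right)$ relative to the mainlobe value $T$ --- not $O(1/C)$. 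For widely separated subcarriers this is not a negligible edge effect: it transfers a non-vanishing fraction of energy from the pulses on the line (\ref{eq24.01.29.1}) to those on the line (\ref{eq24.01.29.2}) (both families already exist in the AAF, corresponding to the two cases in Fig. \ref{fig.3-3}). What this redistribution does \emph{not} do is move any pulse off the $\delta_{f}^{m,m_{1}}$-shifted lattice or alter its periodicity, so the proposition itself --- and your FA/RPDA argument for it --- stands; but the claimed $O(1/C)$ bound is false as stated and should be replaced by the $|m-m_{1}|\Delta t$ accounting above, with the energy-redistribution caveat made explicit.
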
 
\begin{proof}
	The conclusion from Sec. \ref{sec3-2}, stating that satisfying the FA and RPDA conditions boosts a pulse in the AF, also applies to any two distinct chirp subcarriers. This holds because the time-frequency distributions of all chirp subcarriers follow the same subchirp splicing pattern, as shown in Fig. \ref{fig.3-4}. Furthermore, we can observe that the only difference between the instant frequency part of  $\tilde{\mathcal{S}}_{a}^{i}$ ($\tilde{\mathcal{S}}_{b}^{i}$) in (\ref{eq24.01.19.6}) and that of $\mathcal{S}_{a}^{i}$ ($\mathcal{S}_{b}^{i}$) in (\ref{eq24.01.12.2}) is an additional term of $\delta_{f}^{m,m_{1}}$ in (\ref{eq24.01.19.6}), which indicates that the Doppler shifts that corresponds to case Fig. \ref{fig.3-3}(a) (as visuallized in Fig. \ref{fig.3-4}) and case in Fig. \ref{fig.3-3}(c) to satisfy the FA condition in $	A_{\phi_{m},\phi_{m_{1}}}(\tau, \nu)$ are
	\begin{equation}
		\nu=2\tilde{c}_{1}\tau 
		-\frac{\tilde{q}_{m}(\tau)-1}{\Delta t} +\delta_{f}^{m,m_{1}}  
		\label{eq24.01.29.1}
	\end{equation}
	and
	\begin{equation}
		\nu=2\tilde{c}_{1}\tau 
		-\frac{\tilde{q}_{m}(\tau)}{\Delta t}+\delta_{f}^{m,m_{1}},  
		\label{eq24.01.29.2}
	\end{equation}
	respectively. This results in a additional shift of $\delta_{f}^{m,m_{1}}$ of the locations of the pulses in $A_{\phi_{m},\phi_{m_{1}}}(\tau, \nu)$ on the basis of those in $A_{\phi_{m},\phi_{m}}(\tau, \nu)$, as indicated in Fig. \ref{fig.3-5}.  This completes the proof.
\end{proof}

Furthermore, from (\ref{eq24.01.19.11}), we observe that parameter $c_{2}$ affects the value of $A_{\phi_{m},\phi_{m_{1}}}(\tau, \nu)$ but not its magnitude. Building on these insights, we now extend the AF analysis of AFDM chirp subcarriers to AFDM frames.

\section{Ambiguity Functions of AFDM Frames}
\label{sec5}
In Sec. \ref{sec3}, we analyzed the AAF of a single chirp subcarrier, corresponding to the case where only a pilot symbol is transmitted in $\mathbf{x}$. In this section, we extend our study to the aperiodic AFs of various typical AFDM frames with different pilot-data structures. Specifically, we first examine the AAF of AFDM frames that only contain random data symbols and then investigate the CAF of AFDM frames that include both pilot and random data symbols.

\subsection{AAF of the Random-Data-Only AFDM Frame}
The AAF of $s(t)$, which is associated with the MF operation at the MSR as discussed in Sec. \ref{sec2-3}, can be obtained as
\begin{align}
	&A_{s,s}(\tau, \nu)= \int_{-\infty}^{\infty}
	s(t)
	s^{*}(t-\tau)e^{-j2\pi\nu t}\text{d}t \label{eq24.01.21.2} \\
	&=\int_{-\infty}^{\infty}
	\left(\sum_{m=0}^{N-1} x[m] \phi_{m}(t) \right) \notag \\
	&\qquad  \qquad \qquad  \times
	\left(\sum_{m_{1}=0}^{N-1} x^{*}[m_{1}] \phi_{m_{1}}^{*}(t-\tau) \right)
	e^{-j2\pi\nu t}\text{d}t \notag\\
	&=\sum_{m=0}^{N-1} 
	\sum_{m_{1}=0}^{N-1}
	x[m] x^{*}[m_{1}]
	\int_{-\infty}^{\infty}
	\phi_{m}(t) 
	\phi^{*}_{m_{1}}(t-\tau) 
	e^{-j2\pi\nu t}\text{d}t \notag\\
	&=\sum_{m=0}^{N-1} 
	\sum_{m_{1}=0}^{N-1} 
	x[m]x^{*}[m_{1}] A_{\phi_{m},\phi_{m_{1}}}(\tau, \nu),
	\label{eq24.01.21.1}
\end{align}
which is the superposition of the scaled AFs between all chirp subcarrier pairs. Owing to the random nature of the DAFT-domain data symbols in $\mathbf{x}$, $A_{s,s}(\tau, \nu)$ behaves as a random function. We hence study $A_{s,s}(\tau, \nu)$ statistically by considering $\mathbb{E}\left(\left|A_{s,s}(\tau, \nu)\right|^{2}\right) $ with (\ref{eq24.01.22.2}), where the expectation is with respect to $\mathbf{x}$. Considering that a unit power constellation is adopted and data symbols are independent of each other, i.e.,
\begin{align}
	\mathbb{E}\left(x[m]x^{*}[m]\right) =1, \quad \mathbb{E}\left(x[m]x^{*}[\tilde{m}]\right) = 0, \ \forall m\neq\tilde{m},
	\label{eq24.01.21.6}
\end{align}
we get (\ref{eq24.01.24.1}), which shows that  $\mathbb{E}\left(\left|A_{s,s}(\tau, \nu)\right|^{2}\right)$ is affected by the $\mathbb{E}\left(\left|x[m]\right|^{4}\right)$, $\mathbb{E}\left(x[m]^{2}\right)$, and $\mathbb{E}\left((x^{*}[m_{1}])^{2}\right)$ of the adopted constellation.
\begin{figure*}
	\begin{align}
		&\mathbb{E}\left(\left|A_{s,s}(\tau, \nu)\right|^{2}\right)  =\mathbb{E}\left(A_{s,s}(\tau, \nu)A_{s,s}^{*}(\tau, \nu)\right) \notag \\
		&  =\mathbb{E}\left(\left(\sum_{m=0}^{N-1}
		\sum_{m_{1}=0}^{N-1} 
		x[m]x^{*}[m_{1}] A_{\phi_{m},\phi_{m_{1}}}(\tau, \nu)\right) \left(\sum_{m_{2}=0}^{N-1} 
		\sum_{m_{3}=0}^{N-1} 
		x^{*}[m_{2}]x[m_{3}] A_{\phi_{m_{2}},\phi_{m_{3}}}^{*}(\tau, \nu)\right)\right) \notag \\
		& 
		=\sum_{m=0}^{N-1} 
		\sum_{m_{1}=0}^{N-1} 
		\sum_{m_{2}=0}^{N-1} 
		\sum_{m_{3}=0}^{N-1} 
		\mathbb{E}\left(x[m]x^{*}[m_{1}]
		x^{*}[m_{2}]x[m_{3}]\right)
		A_{\phi_{m},\phi_{m_{1}}}(\tau, \nu)
		A_{\phi_{m_{2}},\phi_{m_{3}}}^{*}(\tau, \nu)\label{eq24.01.22.2}\\
		&
		\overset{(\ref{eq24.01.21.6})}{=}
		\mathbb{E}\left(\left|x[m]\right|^{4}\right)
		\sum_{m=0}^{N-1}  
		\left|
		A_{\phi_{m},\phi_{m}}(\tau, \nu)\right|^{2}
		+
		\sum_{m=0}^{N-1}\sum_{m_{2}=0,m_{2}\neq m}^{N-1}
		A_{\phi_{m},\phi_{m}}(\tau, \nu)
		A_{\phi_{m_{2}},\phi_{m_{2}}}^{*}(\tau, \nu) \notag \\
		&\quad +\sum_{m=0}^{N-1}\sum_{m_{1}=0,m_{1}\neq m}^{N-1}
		A_{\phi_{m},\phi_{m_{1}}}(\tau, \nu)
		A_{\phi_{m},\phi_{m_{1}}}^{*}(\tau, \nu)
		+
		\sum_{m=0}^{N-1}\sum_{m_{1}=0,m_{1}\neq m}^{N-1}
		\mathbb{E}\left(x[m]^{2}\right)
		\mathbb{E}\left((x^{*}[m_{1}])^{2}\right)
		A_{\phi_{m},\phi_{m_{1}}}(\tau, \nu)
		A_{\phi_{m_{1}},\phi_{m}}^{*}(\tau, \nu).
		\label{eq24.01.24.1}
	\end{align}
	\hrulefill
\end{figure*}

\begin{figure*}[htbp]
	\centering
	\includegraphics[width=1\textwidth,height=0.285\textwidth]{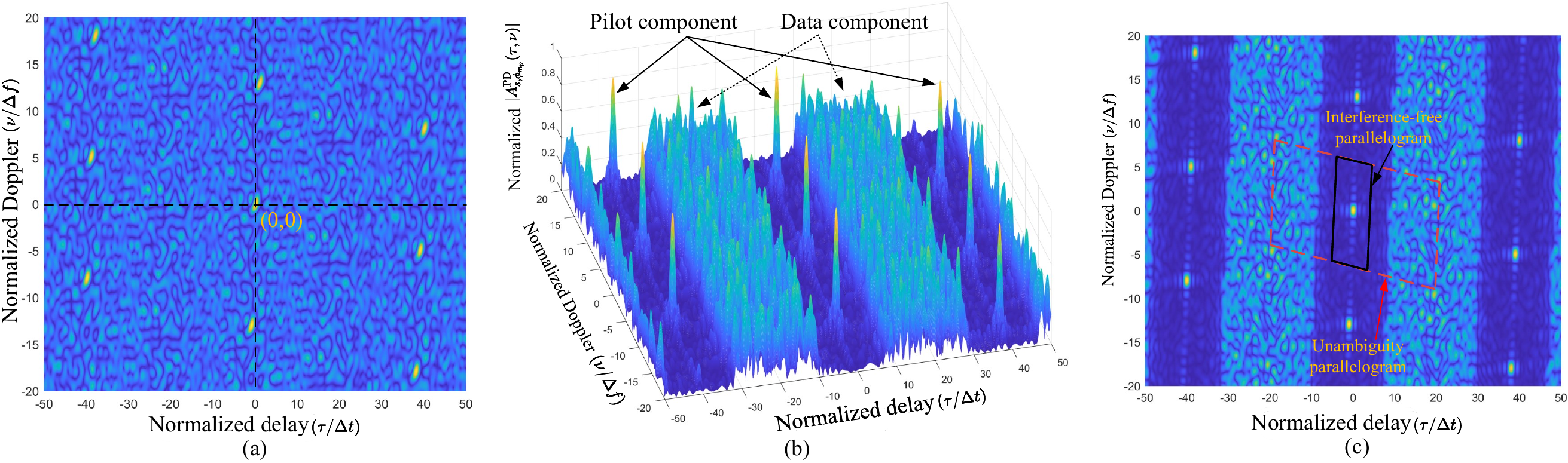}
	\caption{(a) Planform of the CAF between an SP AFDM frame and a pilot chirp subcarrier; (b) CAF between an EP AFDM
		frame and a pilot chirp subcarrier; (c) Planform of (b) ($N=512$, $C=13$, $c_{2} =\sqrt{2}$, $m_{p}=0$, $Q=20$).}
	\label{fig.4-2}
\end{figure*}

\begin{figure}[tbp]
	\centering
	\includegraphics[width=0.48\textwidth,height=0.17\textwidth]{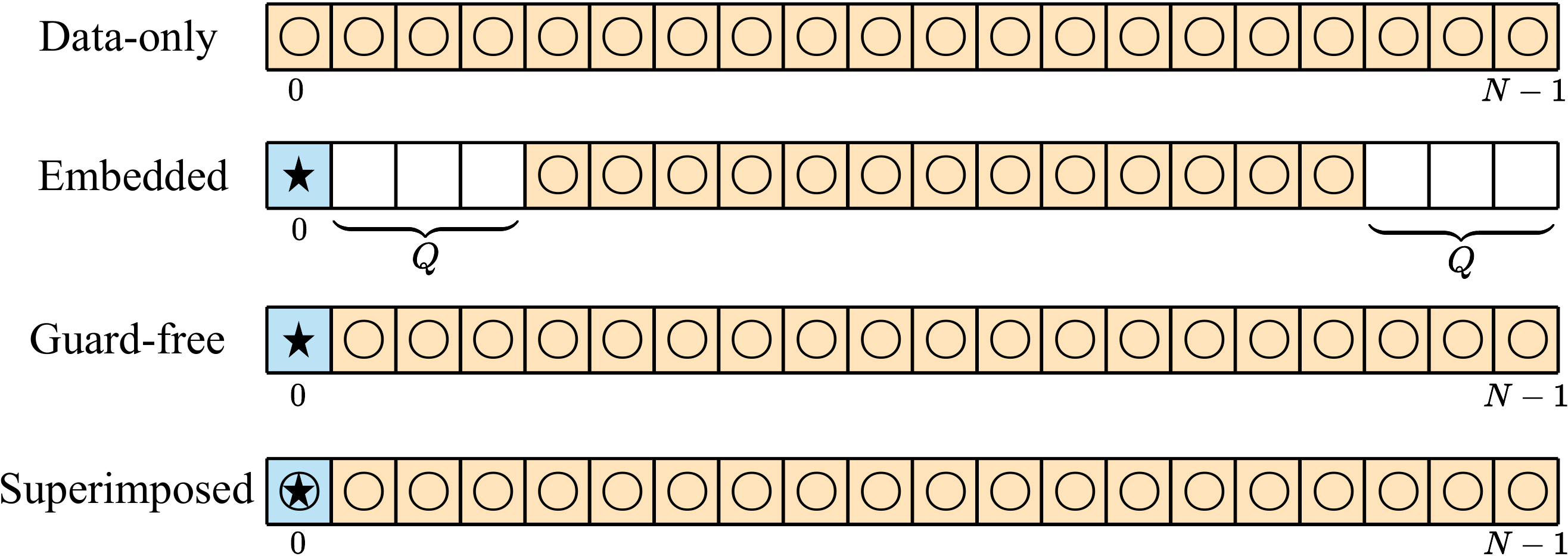}
	\caption{AFDM frames with different pilot-data structures, where each slot represents a chirp subcarrier (“$\bigcirc$”: Data; “$\star$”: Pilot;  blank: Guard).}
	\label{fig.4-1}
\end{figure}

\subsection{CAF of the AFDM Frames with Pilot and Random Data}
Next, we analyze AFDM frames that include pilot symbols.
Let $x_{p}$ and $m_{p}$ represent the pilot and its index in $\mathbf{x}$, respectively, $\mathcal{G}$ and $\mathcal{D}$ represent the index sets of guard symbols and data symbols, respectively, where $\mathcal{G} \cap  \mathcal{D}=\varnothing$, $m_{p}\notin \mathcal{G}$, and $Q$ represent the number of one-sided guard symbols, as illustrated in Fig. \ref{fig.4-1}. Particularly,  $\mathcal{D}=\{m|m=0,1,\dots,N-1\}$ and $\mathcal{D}=\{m|m=0,1,\dots,N-1, \ m\neq m_{p}\}$ correspond to the superimposed pilot (SP) structure in \cite{bb25.01.25.1} and the guard-interval-free pilot structure in \cite{bb25.01.25.2}, respectively, while $m_{p} \notin \mathcal{D}, \mathcal{D} \neq \varnothing  ,\mathcal{G} \neq \varnothing$ corresponds to the embedded pilot (EP) structure in \cite{bb24.08.27.2, 23.10.18.1}. Then, for the BSR that only knows the pilot part of $s(t)$, performing MF on $r_{\text{BSR}}(t)$ with the pilot chirp subcarrier is feasible. This corresponds to the CAF between $s(t)$ and $\phi_{m_{p}}(t)$ as
\begin{align}
	&A_{s,\phi_{m_{p}}}^{\text{PD}}(\tau, \nu)= \int_{-\infty}^{\infty}
	s(t)
	\phi^{*}_{m_{p}}(t-\tau)e^{-j2\pi\nu t}\text{d}t \notag \\
	&\qquad \ \ =\int_{-\infty}^{\infty}
	\left(\sum_{m=0}^{N-1} x[m] \phi_{m}(t) \right) \phi^{*}_{m_{p}}(t-\tau)e^{-j2\pi\nu t}\text{d}t \notag \\
	&\qquad \ \ =
	\underbrace{\left|x_{p}\right|^{2}A_{\phi_{m_{p}},\phi_{m_{p}}}(\tau, \nu)}_{\text{Pilot component}}
	+
	\underbrace{x_{p}^{*}\sum_{m\in\mathcal{D}}
	x[m] A_{\phi_{m},\phi_{m_{p}}}(\tau, \nu)}_{\text{Data component}},
	\label{eq24.01.21.3}
\end{align}
which can be divided into two parts: the pilot component and the data component.

To proceed, we define the pilot-power-to-data-power ratio (PDR) as $\frac{\left|x_{p}\right|^{2}}{\mathbb{E}\left(\left|x[m]\right|^{2}\right)}$. Fig. \ref{fig.4-2}(a) shows the CAF of the SP case with a practical PDR of 10 dB. We can observe that the pulses in the CAF of the AFDM frame with SP are not as distinguishable as those in the AAF of the AFDM chirp subcarrier, as shown in Fig. \ref{fig.3-1}(a). This occurs because the pulses are essentially generated by the AAF of the pilot chirp subcarrier, i.e., the pilot component in (\ref{eq24.01.21.3}), while the additional data component in the SP CAF introduces blurring, causing the pulses to become less distinct. The data component in the SP CAF has non-trivial energy and is distributed over the entire DD plane of the AF despite the relatively small energy of the data symbols given that they are all matched filtered with the pilot symbol. This significantly degrades the sensing performance at the BSR.

In addition to increasing the pilot energy to mitigate the interference from the data component, an effective solution is to introduce guard symbols around the pilot symbol. Fig. \ref{fig.4-2}(b) shows an example of $A_{s,\phi_{m_{p}}}^{\text{PD}}(\tau, \nu)$ with an EP structure, where the ratio of the number of pilot and guard symbols to the total number of chirp subcarriers, denoted as $\rho$, is set to $45.5\%$. We can observe that guard bands are formed around the pilot component. This phenomenon can be explained using Proposition 2: the CAF contributions between the chirp subcarriers within $\mathcal{G}$ and the pilot chirp subcarrier vanish due to the deployment of guard symbols. Consequently, as shown in the view of  $A_{s,\phi_{m_{p}}}^{\text{PD}}(\tau, \nu)$ with the EP structure in Fig. \ref{fig.4-2}(c), a consecutive arrangement of guard symbols around the pilot creates guard bands along the same rotated Doppler dimension as the AAF of the AFDM chirp subcarrier. This enhances the identifiability of the pilot component in the EP CAF, thereby improving sensing accuracy. 

\begin{remark} 
	\textup{We can further define a smaller parallelogram that occupies half of the intersection region between the guard band and the unambiguity parallelogram, as indicated by the black solid line in Fig. \ref{fig.4-2}(c). When the delay and Doppler shifts of all targets in the sensing channel fall within this smaller parallelogram, the original pulse remains confined within it. Consequently, in the MF output at the BSR, the data component does not spread into this region. This ensures interference-free sensing, leading us to refer to this smaller parallelogram as the \textbf{\emph{interference-free parallelogram}}. This interference-free parallelogram is jointly determined by $c_{1}$ and $\rho$ for a given $N$. Note that the interference-free parallelogram is always smaller than the unambiguity parallelogram, implying a stricter requirement on the sensing channel to achieve interference-free sensing at the BSR compared to unambiguous sensing at the MSR.}
\end{remark}

\section{Simulation Results}
\label{sec6}
\begin{figure}[tbp]
	\centering
	\includegraphics[width=0.47\textwidth,height=0.4\textwidth]{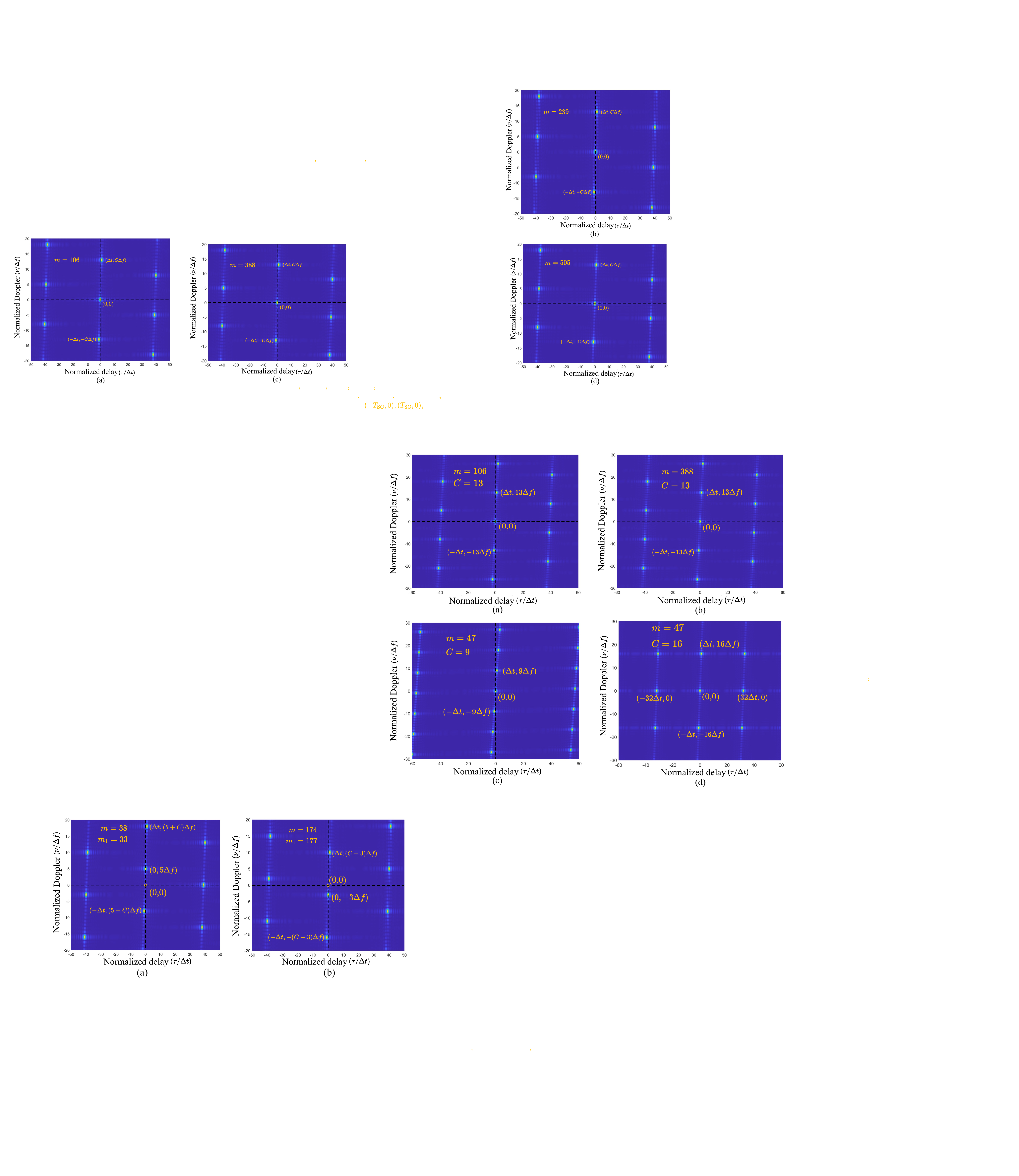}
	\caption{Planforms of the AAFs of different AFDM chirp subcarriers, $N=512$: (a) $C=13, m=106$; (b) $C=13, m=388$; (c) $C=9, m=47$; (d) $C=16, m=47$.}
	\label{fig.6-1}
\end{figure}

In this section, we present simulation results that validate our theoretical analysis. To begin with, Fig. \ref{fig.6-1}(a) and Fig. \ref{fig.6-1}(b) show the planforms of the AAFs of AFDM chirp subcarriers with  $m=106$ and $m=388$, respectively. We can observe that they establish the same local and global properties as the case in Fig. \ref{fig.3-1}(b) with $m=47$, proving the effectiveness of the analysis in Sec. \ref{sec3} to all AFDM chirp subcarriers. This is because all AFDM chirp subcarriers occupy the entire time and frequency resource of an AFDM frame and establish the same subchirp splicing structure with a common chirp slope. Moreover, Fig. \ref{fig.6-1}(c) and Fig. \ref{fig.6-1}(d) show the planforms of the AAFs of different AFDM chirp subcarriers with $C=9$ and $C=16$, respectively. We observe that the relative delay distance and the relative Doppler distance between two adjacent pulses along the rotated Doppler dimension are $\Delta t$ and $C\Delta f$, respectively, verifying Remark 1. In particular, we can observe that the pulses are distributed along the delay dimension for the case in Fig. \ref{fig.6-1}(d). This is because $T_{\text{SC}}=\frac{T}{C}=32 \Delta t$ satisfies the RPDA condition in (\ref{eq24.01.14.2}), leading to a no rotation along the delay dimension in the distribution of pulses in $A_{\phi_{m},\phi_{m}}(\tau, \nu)$, which aligns well with Remark 2.

\begin{figure}[tbp]
	\centering
	\includegraphics[width=0.47\textwidth,height=0.425\textwidth]{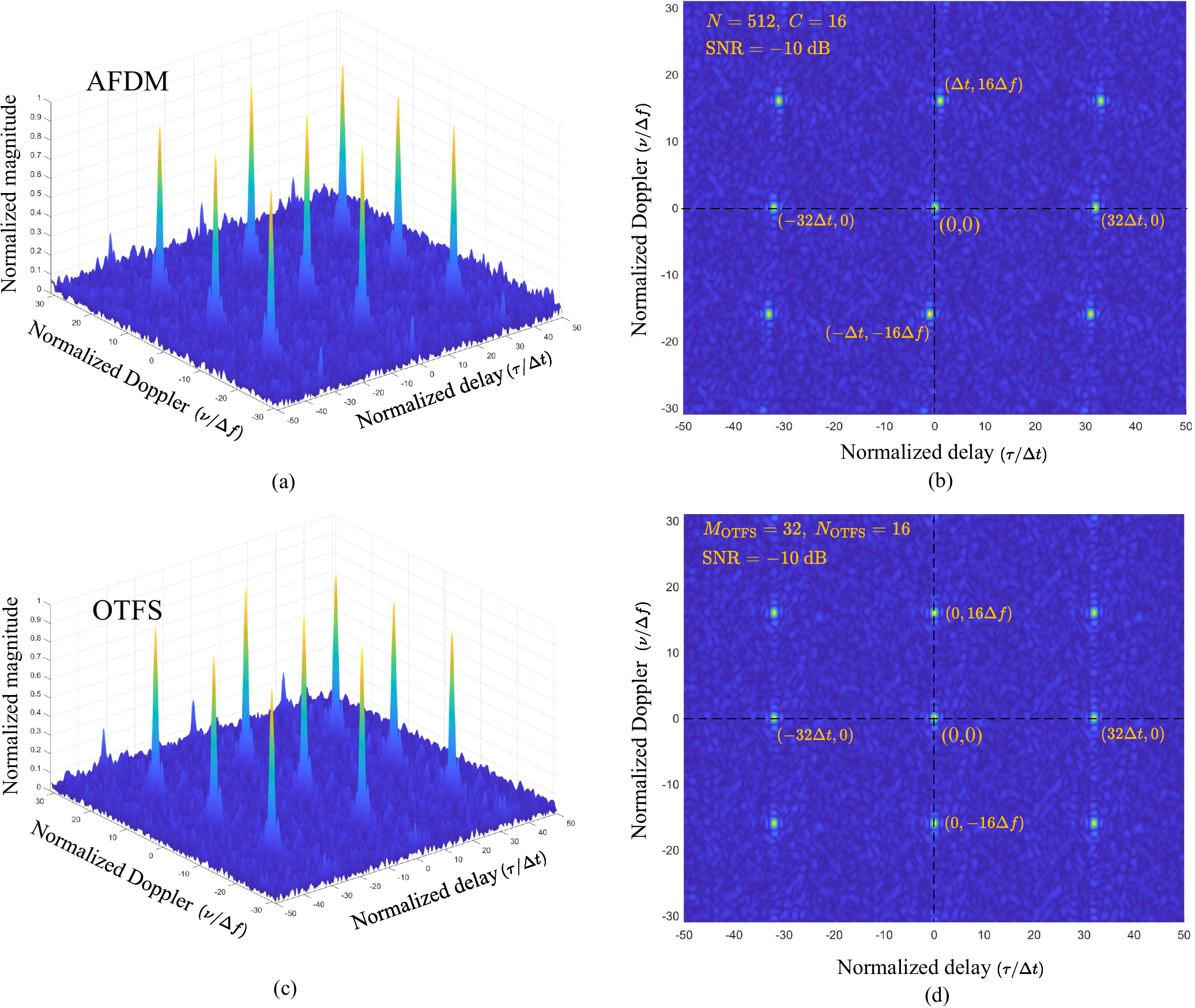}
	\caption{AAFs of AFDM chirp subcarriers and OTFS subcarriers generated by RRC filter with a roll-off factor of 0.25, rectangular pulse shaping, -10 dB SNR: (a) AAF of AFDM chirp subcarriers, $N=512$, $C=13$; (b) Planform of (a); (c) AAF of OTFS subcarriers \cite{bb24.03.15.3}, $M_{\text{OTFS}}=32$, $N_{\text{OTFS}}=16$; (d) Planform of (c).}
	\label{fig.6-1-2}
\end{figure}

Fig. \ref{fig.6-1-2} compares the AAFs of AFDM chirp subcarriers and OTFS subcarriers. The discrete-time signals are filtered with a root-raised-cosine (RRC) pulse with a roll-off factor of 0.25, combined with rectangular pulse shaping and a signal-to-noise ratio (SNR) of -10 dB \cite{bb24.9.15.1, bb24.03.15.3}. To ensure both waveforms occupy identical time-frequency resources, we set $N=512$, $M_{\text{OTFS}}=32$, and $N_{\text{OTFS}}=16$, so that $N=M_{\text{OTFS}}N_{\text{OTFS}}$, where $N_{\text{OTFS}}$ and $N_{\text{OTFS}}$ denote the numbers of Doppler and delay bins, respectively, in the DD block of an OTFS frame. Figs. \ref{fig.6-1-2}(a)(b) show that the AAF of RRC-filtered AFDM subcarrier retains the “spike-like” local property and “periodic-like” global property observed for the AAF of the analytic signal $\phi_{m}(t)$, confirming that the AF analysis developed in this paper remains valid for continuous-time AFDM signals generated via practical digital-to-analog filtering. Furthermore, the AAFs of AFDM chirp subcarriers and OTFS subcarriers exhibit strikingly similar local and global characteristics. This resemblance arises because both subcarriers fully occupy the same time-frequency grid and contain inherent periodic structures, the subchirp splicing in AFDM chirp subcarriers and the “pulse-train” features in the time- and frequency-domain representations of OTFS subcarriers \cite{bb25.01.21.1, bb24.03.15.3}.

We then examine the CAFs of different AFDM chirp subcarrier pairs. Fig. \ref{fig.6-2}(a) shows the CAF between the $m=38$ and $m_{1}=33$ chirp subcarriers, which establishes the same pulse distribution as in the case of Fig. \ref{fig.3-5} with $m=445$ and $m_{1}=440$. This is because they have the same frequency difference of $\delta_{f}^{m,m_{1}}=5\Delta f$. Moreover, Fig. \ref{fig.6-2}(b) shows the CAF between the $m=38$ and $m_{1}=33$ chirp subcarriers ($\delta_{f}^{m,m_{1}}=-3\Delta f$), where an additional shift of $-3\Delta f$ along the Doppler dimension compared to $A_{\phi_{m},\phi_{m}}(\tau, \nu)$ can be clearly observed.
These verify the effectiveness of the analysis in Sec. \ref{sec4} to all AFDM chirp subcarrier pairs.

\begin{figure}[tbp]
	\centering
	\includegraphics[width=0.48\textwidth,height=0.22\textwidth]{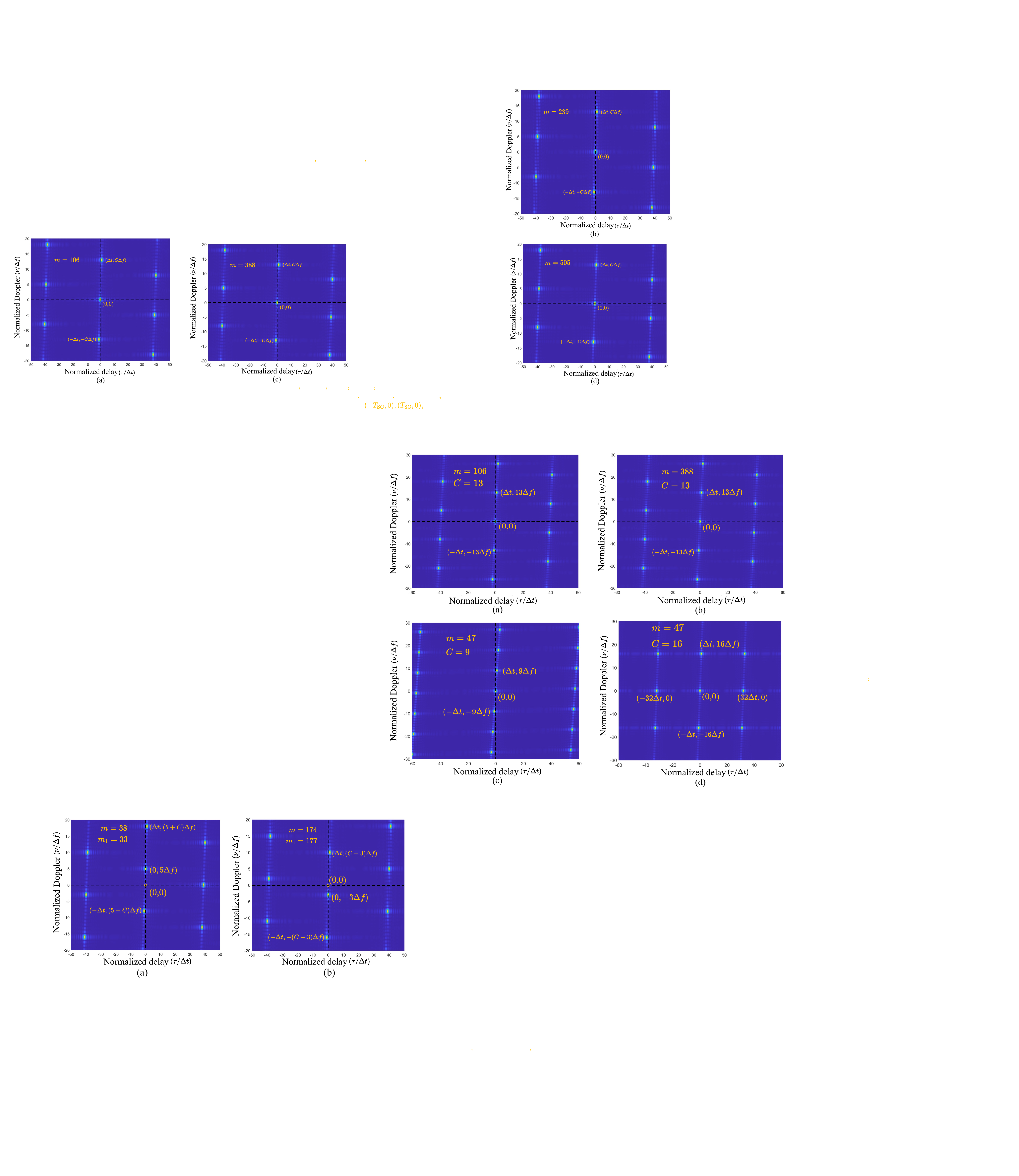}
	\vspace{-0.5em}
	\caption{Planforms of the CAFs of different chirp subcarriers,  $N=512$, and $c_{2} =0$: (a) $C=13$, $m=38$, $m_{1}=33$   ($\delta_{f}^{m,m_{1}}=5\Delta f$); (b) $C=13$, $m=174$, $m_{1}=177$   ($\delta_{f}^{m,m_{1}}=-3\Delta f$).}
	\label{fig.6-2}
\end{figure}

\begin{figure}[tbp]
	\centering
	\includegraphics[width=0.48\textwidth,height=0.219\textwidth]{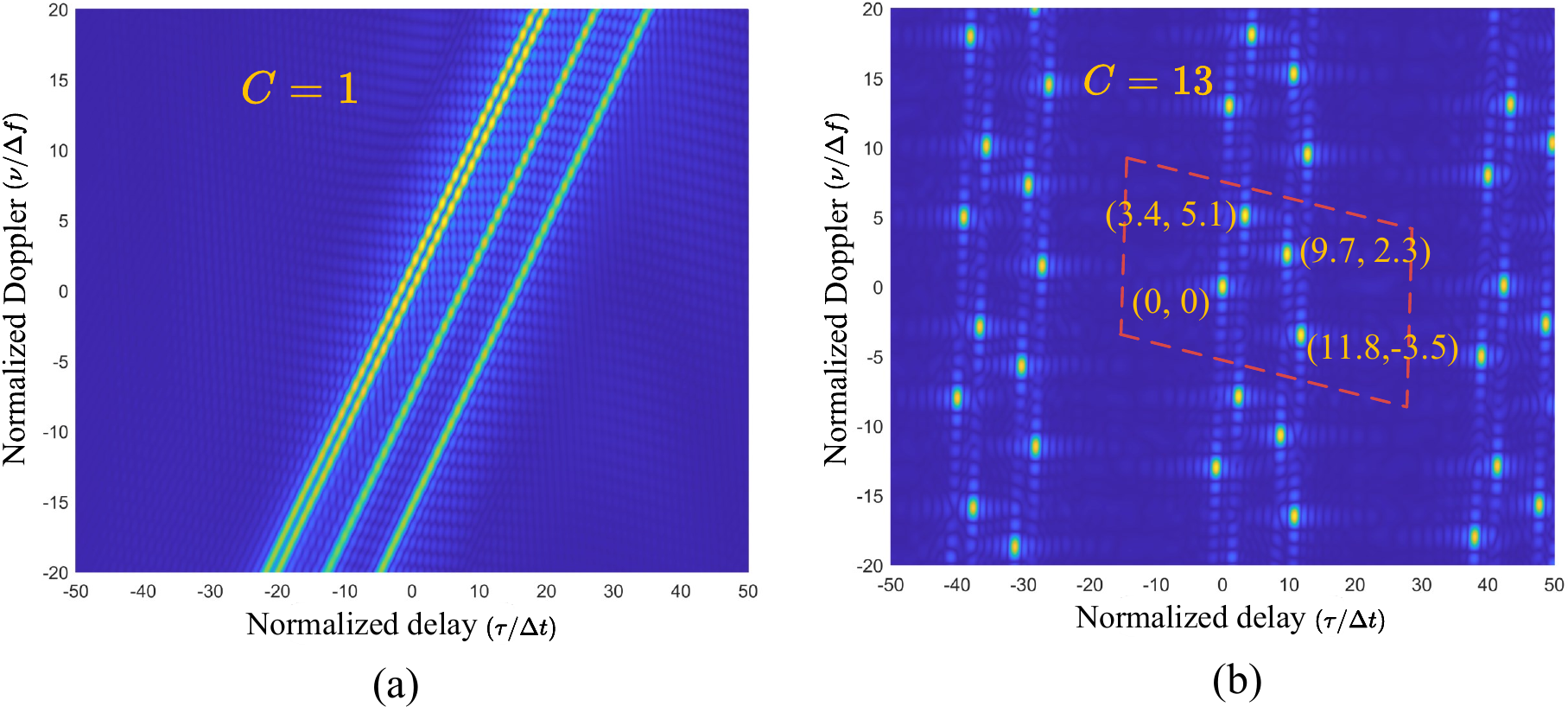}
	\vspace{-0.5em}
	\caption{Planforms of the output of MF with a chirp subcarrier is transmitted, four targets with delay shifts of $[0, 3.4, 9.7, 11.8]\Delta t$ and Doppler shifts of $[0, 5.1, 2.3, -3.5]\Delta f$: (a) $C=1$; (b) $C=13$.}
	\label{fig.6-3}
\end{figure}

Next, we demonstrate the influence of the parameter $c_{1}$ on the sensing ability of AFDM chirp subcarriers with Fig. \ref{fig.6-3}. We adopt a carrier frequency $f_{c}=72$ GHz, with $N=512$, and $\Delta f= 12$ kHz, resulting in a signal bandwidth of $B=6.1$ MHz and a duration of $T=83.3 \ \mu$s. Assume a sensing scenario involving four targets with round-trip characteristics (range in meters, velocity in km/h) of (0, 0), (331.8, 918.0), (946.6, 414.0), and (1,151.5, $-630$), corresponding to delay-Doppler shifts of $(0,\ 0)$, $(3.4\Delta t,\  5.1\Delta f)$, $(9.7\Delta t,\ 2.3 \Delta f)$, and $(11.8\Delta t,\ -3.5\Delta f)$, respectively, and a pilot chirp subcarrier is transmitted. Fig. \ref{fig.6-3}(a) shows the MF output at the sensing receiver for $C=1$, corresponding to an approximate maximum unambiguous delay shift of $T_{\text{sc}}=512 \Delta t$ and Doppler shift of $C\Delta f=\Delta f$. In this scenario, all four targets can be preliminarily distinguished. However, the unambiguity region, represented by a narrow parallelogram, is insufficient to fully encompass the sensing channel, preventing accurate resolution of all targets' delay and Doppler shifts. In contrast, Fig. \ref{fig.6-3}(b) shows the case with $C=13$, corresponding to an approximate maximum unambiguous delay shift of $39.4 \Delta t$ and Doppler shift of $13\Delta f$. In this case, all four targets fall within the unambiguity parallelogram, allowing their delay and Doppler shifts to be fully determined. This observation aligns well with Proposition \ref{prop1}.

We proceed to investigate the AAF of AFDM frames with random data. All numerical results are obtained by averaging over 10,000 Monte Carlo simulations. Fig. \ref{fig.6-4}(a) shows the magnitude of $\mathbb{E}\left(\left|A_{s,s}^{\text{Data}}(\tau, \nu)\right|^{2}\right)$. Notably, we observe a single energy-concentrated pulse at the origin, which differs from the AAF of AFDM chirp subcarriers. This can be explained by considering the summation in (\ref{eq24.01.21.1}). The pulses that do not localize at the origin - originating from the AAFs of individual chirp subcarriers and the CAFs between different chirp subcarriers - are averaged out. In contrast, the pulses at the origin from the AAFs of all chirp subcarriers share the same value ($A_{\phi_{m},\phi_{m}}(0, 0)=T$, $m=0,1,\dots,N-1$), resulting in a single peak in $\left|A_{s,s}^{\text{Data}}(\tau, \nu)\right|^{2}$. This finding confirms the feasibility of AFDM-ISAC frames with i.i.d data for radar sensing. Additionally, Fig. \ref{fig.6-4}(b) and Fig. \ref{fig.6-4}(c) show the zero-Doppler cut and zero-delay cut of $\mathbb{E}\left(\left|A_{s,s}^{\text{Data}}(\tau, \nu)\right|^{2}\right)$, respectively. The theoretical results align perfectly with their numerical counterparts, further validating our analysis. 

\begin{figure*}[tbp]
	\centering
	\includegraphics[width=1\textwidth,height=0.293\textwidth]{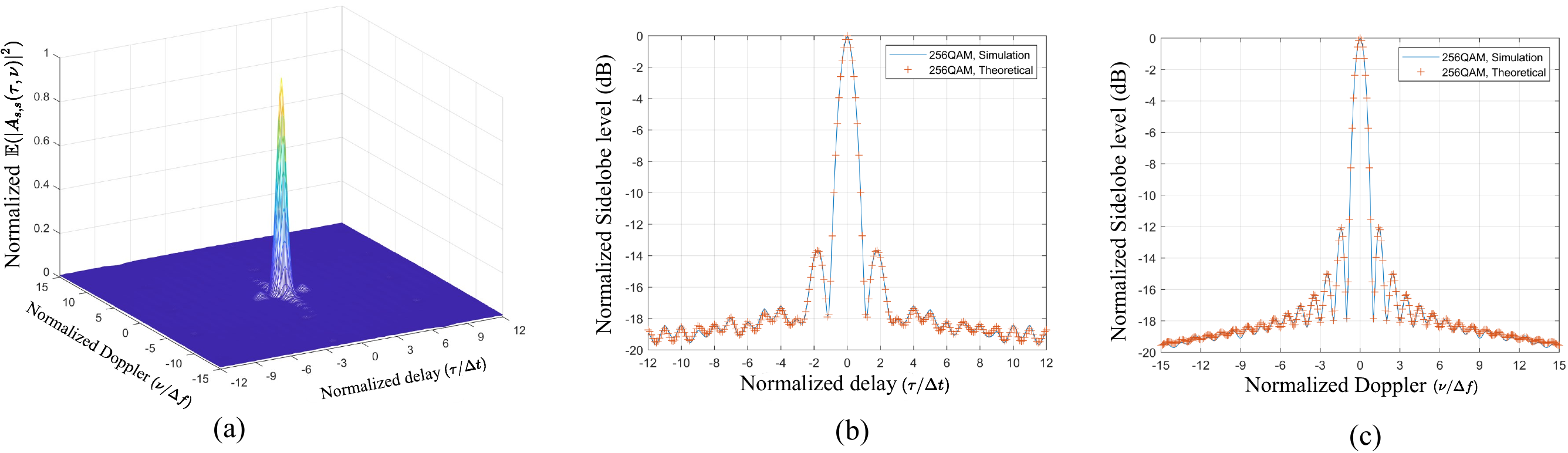}
	\caption{AAF of an AFDM frame with random data, $N=64$, $265$-QAM, $C=7$, and $c_{2}=0$: (a): Simulation of  $\mathbb{E}\left(\left|A_{s,s}^{\text{Data}}(\tau, \nu)\right|^{2}\right)$; (b) Zero-Doppler cut of $\mathbb{E}\left(\left|A_{s,s}^{\text{Data}}(\tau, \nu)\right|^{2}\right)$; (c) Zero-delay cut of $\mathbb{E}\left(\left|A_{s,s}^{\text{Data}}(\tau, \nu)\right|^{2}\right)$.}
	\label{fig.6-4}
\end{figure*}

\section{Conclusion}
\label{sec7}
In this paper, we provided a comprehensive analysis of the ambiguity functions of continuous-time AFDM signals. First, we derived the AAF of AFDM chirp subcarriers and characterized their “spike-like” local property and “periodic-like” global property along the rotated delay and Doppler dimensions. Building on this, we defined an unambiguity parallelogram and demonstrated that unambiguous sensing can be achieved by appropriately adjusting the parameter $c_{1}$. Next, we extended our analysis to the CAF between different AFDM chirp subcarriers, revealing that the CAF introduces an additional shift along the Doppler dimension, which is equal to the frequency difference between the two chirp subcarriers. Finally, we investigated the AFs of various AFDM frame structures with different pilot-data allocations, highlighting that the insertion of guard symbols in AFDM facilitates interference-free bistatic sensing. For future work, we plan to extend our AF analysis of AFDM signals to incorporate different pulse shaping designs.

{\appendices
	\section{Proof of  Corollary 1}
	\label{APP1}
	Without loss of generality, we calculate the area of the parallelogram formed by pulses (b, c, e, f) in Fig. \ref{fig.3-1}(b). Firstly, the base of the parallelogram is line (b, e), whose length can be calculated according to the coordinates of pulse “b”  and pulse “c” discussed in Remark 1 as
	\begin{equation}
		l_{b,e} = \sqrt{\Delta t^{2}+(2\tilde{c}_{1}\Delta t)^{2}}.
		\label{eq24.01.19.1}
	\end{equation}
	Then we construct a height of the parallelogram with a solid line that starts from point “c” and is perpendicular to line (b, e) at point “j”. Additionally, we further draw an auxiliary line along the time-delay dimension with a dash-dot line, which starts from point “c” and intersects line (b, e) at point “s”. According to Remark 2, the length of line (c, s) is $l_{c,s}=d_{\text{delay}}=T_{\text{SC}}$ and $\tan_{\angle j,s,c}=2\tilde{c}_{1}$, where $\angle j,s,c$ represents the angle (j, s, c). Subsequently, by applying the Pythagorean theorem on the triangle (c, j, s), we have
	\begin{equation}
		l_{c,j} = \sqrt{\frac{1}{\Delta t^{2}+(2\tilde{c}_{1}\Delta t)^{2}}}.
		\label{eq24.01.19.2}
	\end{equation}
Finally, the area of parallelogram (b, c, e, f) can be calculated as $l_{b,e}l_{c,j}=1$. This completes the proof of Corollary 1.
}

\vfill
\end{document}